\newtheorem{theorem}{Theorem}
\newtheorem{lemma}[theorem]{Lemma}
\theoremstyle{definition}
\newtheorem*{definition*}{Definition}
\newtheorem{remark}{Remark}
\newtheorem{claim}[theorem]{Claim}
\newtheorem{conjecture}[theorem]{Conjecture}
\newtheorem{question}[theorem]{Question}
\newcommand{\cG}{\mathcal{G}}
\title{A polynomial version of Cereceda's conjecture\footnote{This work was supported by ANR project GrR (ANR-18-CE40-0032).}}
\author[1]{Nicolas Bousquet} 
\author[2]{Marc Heinrich} 
\affil[1]{Univ. Grenoble Alpes, CNRS, Laboratoire G-SCOP, Grenoble-INP, Grenoble, France.\thanks{nicolas.bousquet@grenoble-inp.fr}}
\affil[2]{LIRIS, Universit\'e Claude Bernard, Lyon, France}
\date{}
\begin{document}

\maketitle

\begin{abstract}
Let $k$ and $d$ be such that $k \ge d+2$. Consider two $k$-colourings of a $d$-degenerate graph $G$. Can we transform one into the other by recolouring one vertex at each step while maintaining a proper coloring at any step? Cereceda et al. answered that question in the affirmative, and exhibited a recolouring sequence of exponential length. However, Cereceda conjectured that there should exist one of quadratic length.

The $k$-reconfiguration graph of $G$ is the graph whose vertices are the proper $k$-colourings of $G$, with an edge between two colourings if they differ on exactly one vertex. Cereceda's conjecture can be reformulated as follows: the diameter of the $(d+2)$-reconfiguration graph of any $d$-degenerate graph on $n$ vertices is $O(n^2)$. So far, the existence of a polynomial diameter is open even for $d=2$.

In this paper, we prove that the diameter of the $k$-reconfiguration graph of a $d$-degenerate graph is $O(n^{d+1})$ for $k \ge d+2$. Moreover, we prove that if $k \ge \frac 32 (d+1)$ then the diameter of the $k$-reconfiguration graph is quadratic, improving the previous bound of $k \ge 2d+1$.
We also show that the $5$-reconfiguration graph of planar bipartite graphs has quadratic diameter, confirming Cereceda's conjecture for this class of graphs.
\end{abstract}

\section{Introduction}

Reconfiguration problems consist in finding step-by-step transformations between two feasible solutions of a problem such that all intermediate states are also feasible. Such problems model dynamic situations where a given solution already in place has to be modified for a more desirable one while maintaining some properties all along the transformation. Reconfiguration problems have been studied in various fields such as discrete geometry~\cite{BoseLPV18}, optimization~\cite{BBRM18} or statistical physics~\cite{mohar4}.
For a complete overview of the reconfiguration field, the reader is referred to the two recent surveys on the topic~\cite{Nishimura17,Heuvel13}. In this paper, our reference problem is graph colouring.

Let $\Pi$ be a problem and $I$ be an instance of $\Pi$. The \emph{reconfiguration graph} is the graph where vertices are solutions of $I$ and where there is an edge between two vertices if one can transform the first solution into the other in one step (for graph colouring one step means modifying the colour of a single vertex\footnote{Note that recolouring operations have been studied, for instance recolouring using Kempe chains (see e.g.~\cite{BonamyBFJ19}). In this article, we focus on single vertex recolourings.}).
Given a reconfiguration problem, several questions may arise. (i) Is it possible to transform any solution into any other, i.e. is the reconfiguration graph connected? (ii) If yes, how many steps are needed to perform this transformation, i.e. what is the diameter of the reconfiguration graph? 
In this work, we will focus on the diameter of the reconfiguration graph. The diameter of the reconfiguration graph plays an important role, for instance in random sampling, since it provides a lower bound on the mixing time of the underlying Markov chain (and the connectivity of the reconfiguration graph ensures the ergodicity of the Markov chain\footnote{Actually, it only gives the irreducibility of the chain. To get the ergodicity, we also need the chain to be aperiodic. For the chains associated to proper graph colourings, this property is usually straightforward.}). Since proper colourings correspond to states of the antiferromagnetic Potts model at zero temperature, Markov chains related to graph colourings received a considerable attention in statistical physics and many questions related to the ergodicity or the mixing time of these chains remain widely open (see e.g.~\cite{ChenDMPP19,frieze2007survey}).

\paragraph{Graph recolouring.}
All along the paper $G=(V,E)$ denotes a graph, $n$ is the size of $V$ and $k$ is an integer. For standard definitions and notations on graphs, we refer the reader to~\cite{Diestel}.
A \emph{(proper) $k$-colouring} of $G$ is a function $f : V(G) \rightarrow \{ 1,\ldots,k \}$ such that, for every edge $xy\in E$, we have $f(x)\neq f(y)$. Throughout the paper we will only consider proper colourings and will then omit the proper for brevity. The \emph{chromatic number} $\chi(G)$ of a graph $G$ is the smallest $k$ such that $G$ admits a $k$-colouring.
Two $k$-colourings are \emph{adjacent} if they differ on exactly one vertex. The \emph{$k$-reconfiguration graph of $G$}, denoted by $\cG(G,k)$ and defined for any $k\geq \chi(G)$, is the graph whose vertices are $k$-colourings of $G$, with the adjacency relation defined above. 
Cereceda, van den Heuvel and Johnson provided an algorithm to decide whether, given two $3$-colourings, one can be transformed into the other in polynomial time, and characterized graphs for which $\cG(G,3)$ is connected~\cite{Cereceda09,CerecedaHJ11}.
Given any two $k$-colourings of $G$, it is $\mathbf{PSPACE}$-complete to decide whether one can be transformed into the other for $k \geq 4$~\cite{BonsmaC07}. 

The \emph{$k$-recolouring diameter} of a graph $G$ is the diameter of $\cG(G,k)$ if $\cG(G,k)$ is connected and is equal to $+\infty$ otherwise. In other words, it is the minimum~$D$ for which any $k$-colouring can be transformed into any other one through a sequence of at most~$D$ adjacent $k$-colourings.
Bonsma and Cereceda~\cite{BonsmaC07} proved that there exists a family $\mathcal{G}$ of graphs and an integer $k$ such that, for every graph $G \in \mathcal{G}$ there exist two $k$-colourings whose distance in the $k$-reconfiguration graph is finite and super-polynomial in $n$.


Cereceda conjectured that the situation is different for degenerate graphs.
A graph $G$ is \emph{$d$-degenerate} if any subgraph of $G$ admits a vertex of degree at most~$d$. In other words, there exists an ordering $v_1,\ldots,v_n$ of the vertices such that for every $i \le n$, the vertex $v_i$ has at most $d$ neighbours in $v_{i+1},\ldots,v_n$.
It was shown independently by Dyer et al.~\cite{dyer2006randomly} and by Cereceda et al.~\cite{Cereceda09} that for any $d$-degenerate graph $G$ and every $k \geq d+2$, $\cG(G,k)$ is connected. However the (upper) bound on the $k$-recolouring diameter given by these constructive proofs is of order $c^{n}$ (where $c$ is a constant). 
Cereceda~\cite{Cereceda} conjectured that the the diameter of $\cG(G,k)$ is of order $\mathcal{O}(n^2)$ as long as $k \ge d+2$. If correct, the quadratic function is sharp, even for paths or chordal graphs as proved in~\cite{BonamyJ12}.
The existence of a polynomial upper bound instead of a quadratic one also is open, even for $d=2$ or restricted to particular graph classes such as planar graphs. In what follows, we will call respectively the polynomial (resp. quadratic) Cereceda's conjecture the question of proving that the $k$-recolouring diameter of $d$-degenerate graphs is polynomial (resp. quadratic) for $k \geq d+2$.

The quadratic Cereceda's conjecture is known to be true only for $d=1$ (for trees)~\cite{BonamyJ12} and $d=2$ with $\Delta \le 3$~\cite{FeghaliJP16} (where $\Delta$ denotes the maximum degree of the graph). Cereceda~\cite{Cereceda} showed that for any $d$-degenerate graph $G$ and every $k \geq 2d+1$, the diameter of $\cG(G,k)$ is $O(n^2)$. Bousquet and Perarnau proved that when $k \ge 2d+2$, the diameter of $\cG(G,k)$ is linear~\cite{BousquetP16}.

Even if the conjecture is open for general graphs, it has been proved for several graph classes such as chordal graphs~\cite{BonamyJ12} and bounded treewidth graphs~\cite{BonamyB18}. The polynomial Cereceda's conjecture holds if we replace degeneracy by maximum average degree~\cite{BousquetP16,Feghali19}. In particular, it implies that the $8$-recolouring diameter of a planar graph is polynomial. The polynomial Cereceda's conjecture asks for more since it states that the diameter should be polynomial when $k=7$. So far, the best known upper bound on the $7$-recolouring diameter of planar graphs was subexponential~\cite{EibenF18}.

\paragraph{Our result.}
The main result of the paper is the following:

\begin{theorem}
\label{thm:main}
	Let $d,k \in \mathbb{N}$ and $G$ be a $d$-degenerate graph. Then $\cG(G, k)$ has diameter at most:
	\begin{itemize}
    	\item $C n^2$ if $k \geq \frac {3} {2} (d+1)$ (where $C$ is a constant independent of $k$ and $d$), 
        \item $C_\varepsilon n^{\lceil 1/\varepsilon \rceil}$ if $k \geq (1+\varepsilon)(d+2)$ and $0 < \varepsilon < 1$ (where $C_\varepsilon$ is a constant independent from $k$ and $d$), 
        \item $(C n)^{d+1}$ for any $d$ and $k \ge d+2$ (where $C$ is a constant independent from $k$ and $d$).
	\end{itemize}
\end{theorem}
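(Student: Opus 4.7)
My plan is to analyze a unified recoloring subroutine by induction on the degeneracy $d$, from which all three bounds follow by specializing the ambient color budget $k$. I would fix a degeneracy ordering $v_1,\dots,v_n$ of $G$ so that each $v_i$ has at most $d$ neighbors in $\{v_{i+1},\dots,v_n\}$, and reconfigure $\alpha$ to $\beta$ by processing the vertices in order, committing $v_i$ to its target color $\beta(v_i)$ in the $i$-th phase and never touching $v_1,\dots,v_{i-1}$ again. Bounding the diameter then reduces to bounding the cost of a single phase, i.e., the cost of a subroutine that sets a designated vertex to a prescribed color in the induced subgraph on the yet-to-be-fixed vertices.

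The subroutine itself is recursive. To set $v_i$ to $\beta(v_i)$, I first identify the (at most $d$) forward neighbors of $v_i$ currently colored $\beta(v_i)$ and recolor each of them to some other color, and then flip $v_i$. Each such neighbor $u$ has at most $d$ forward neighbors, so at most $d$ colors are forbidden at $u$ in the remaining subgraph; the $k-d$ remaining colors measure the available slack. When this slack is abundant, $u$ can be recolored in a single step; when it is tight, the subroutine invokes itself on a strictly smaller subgraph to free up an extra color at $u$ before recoloring $u$. The depth of this cascading recursion is governed by the ratio of $k$ to $d$: for $k \geq \frac{3}{2}(d+1)$ an amortized argument bounds the cumulative cascade length and yields $O(n)$ cost per phase, hence $O(n^2)$ overall; for $k \geq (1+\varepsilon)(d+2)$ the cascades reach depth at most $\lceil 1/\varepsilon\rceil - 1$, giving $O(n^{\lceil 1/\varepsilon\rceil})$ cost in total; and for the boundary case $k = d+2$ cascades may reach depth $d$, giving the $(Cn)^{d+1}$ bound.

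The main obstacle will be the sharp amortized analysis for the quadratic regime: a naive per-cascade bound gives only a cubic bound, and to reach $O(n^2)$ I would introduce a potential function counting misplaced color occurrences and charge each cascaded recoloring to a distinct vertex-color pair that cannot be charged again. Propagating this charging scheme through the intermediate $(1+\varepsilon)(d+2)$ regime and then into the extreme $k = d+2$ case demands that each nested recursive call operate on a subgraph whose effective degeneracy decreases by at least one, so that the overall recursion depth stays bounded by $d$; verifying this invariant—specifically that the neighbors of a recolored vertex really do form a subgraph of strictly smaller degeneracy with one fewer available color—will be the most delicate technical step.
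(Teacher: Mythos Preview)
Your cascade rests on the claim that when you must recolor a forward neighbor $u$ of $v_i$, ``at most $d$ colors are forbidden at $u$ in the remaining subgraph.'' This is not true: $u$ has at most $d$ \emph{forward} neighbors, but it may have arbitrarily many backward neighbors, and every one of their current colors is forbidden at $u$. In particular the committed vertices $v_1,\dots,v_{i-1}$ can include many neighbors of $u$, all frozen at their $\beta$-colors; no recursive call can lift those constraints, and a high-degree $u$ may see all $k-1$ other colors in its neighborhood and be stuck. The invariant you yourself flag as the delicate step---that each nested call lands in a subproblem of strictly smaller effective degeneracy with one fewer available color---is exactly what fails: committing $v_1,\dots,v_{i-1}$ removes only \emph{backward} neighbors of later vertices, so out-degrees $d^+$ do not drop while the effective lists do shrink. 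The slack $|L(v)|-d^+(v)$ is eroded rather than preserved, and nothing bounds the cascade depth.

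The paper sidesteps this via a different reduction, carried out in the list-coloring setting. Rather than fixing vertices one by one, it first maneuvers to a coloring admitting a \emph{full} set of colors $S$ (with $|S|=a$): every vertex either is colored from $S$, is missing the color from its list, or has an \emph{out}-neighbor with that color. Deleting the $S$-colored vertices and striking $S$ from all lists then shrinks every surviving list by at most $|S|$ while dropping every surviving out-degree by at least $|S|$, so $a$-feasibility is maintained with $|S|$ fewer total colors. This is what bounds the recursion depth by $\lceil k/a\rceil-2$ (hence by $d$ when $k=d+2$, $a=1$). Reaching a full set from an arbitrary coloring is shown to cost about $n/a$ recursive calls on the reduced instance, yielding $f_a(n,k)\le(\tfrac{2n}{a}+O(1))\,f_a(n,k-a)+O(n^2)$ and all three diameter bounds. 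Your vertex-by-vertex commitment scheme has no analogue of this out-degree-decreasing step, and without it the induction cannot get off the ground.
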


\begin{figure}
\centering
\def\arraystretch{1.2}
\begin{tabular}{|l|c|c|}
  \hline
  $k$-recoloring diameter &  Lower bound & Upper bound \\
  \hline
  $k=d+2$ & $\mathcal{O}(n^2)$ ~\cite{BonamyJ12} & $\mathcal{O}(n^{d+1})$ [Theorem~\ref{thm:main}]\\
  $k = d+3$ & $\mathcal{O}(n)$ & $\mathcal{O}(n^{\left\lceil\frac{d+1}{2}\right\rceil})$ [Theorem~\ref{thm:main}] \\
  $k \ge (1+\varepsilon)(d+1)$ & $\mathcal{O}(n)$ & $\mathcal{O}(n^{\lceil 1/\varepsilon \rceil})$ [Theorem~\ref{thm:main}] \\
  $k \ge \frac 32(d+1)$ & $\mathcal{O}(n)$ & $\mathcal{O}(n^2)$ [Theorem~\ref{thm:main}] \\
  $k \ge  2(d+1)$ & $\mathcal{O}(n)$ & $\mathcal{O}(n)$ ~\cite{BousquetP16}\\
  \hline
\end{tabular}
\caption{Known results on the $k$-recolouring diameter of $d$-degenerate graphs.}
\label{fig:summary}
\end{figure}

In particular, it implies that the $7$-recolouring diameter of planar graphs is polynomial (of order $O(n^6)$), answering a question of~\cite{BousquetP16,Feghali19}, and is quadratic if $k \geq 9$ (improving the recent result of Feghali giving $k \ge 10$~\cite{Feghali19_2}). For general graphs, our result guarantees moreover that the diameter becomes a polynomial independent of $d$ as long as $k \ge (1+\varepsilon) (d+2)$. We also obtain a quadratic diameter when the number of colours is at least $\frac{3}{2} \cdot (d+1)$, improving the result of Cereceda~\cite{Cereceda} who obtained a similar result for $k\ge 2d+1$. Note moreover that Theorem~\ref{thm:main} ensures that the diameter is polynomial as long as~$d$ is a fixed constant, which was open even for $d=2$ (we get a diameter of order $\mathcal{O}(n^3)$ for $d=2$). The main known results on lower and upper bounds on the $k$-recolouring diameter are summarized in Figure~\ref{fig:summary}.

In order to show Theorem~\ref{thm:main}, we need to prove a more general result that also holds for list-colourings. Indeed, we often need to consider induced subgraphs of our initial graph where the colours of some vertices are``frozen'' (i.e. do not change). By considering the list colouring version, we can delete these vertices and remove their colours from the list of all their neighbours. This more general statement implying Theorem~\ref{thm:main} is given in Section~\ref{sec:main}.

We complete Theorem~\ref{thm:main} by proving the quadratic Cereceda's conjecture for planar bipartite graphs. Euler's formula ensures that planar bipartite graphs are $3$-degenerate. We prove the following in Section~\ref{sec:bip}.

\begin{restatable}{theorem}{bipPlanar}
\label{thm:planarBip}
Let $G$ be a planar bipartite graph. The diameter of $\cG(G, 5)$ is at most $4n^2$.
\end{restatable}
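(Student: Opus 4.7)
The plan is to prove Theorem~\ref{thm:planarBip} by induction on $n = |V(G)|$, using a strengthened list-colouring invariant in the spirit of Theorem~\ref{thm:main}. The inductive hypothesis roughly states: for every planar bipartite graph $G$ equipped with a list assignment $L$ satisfying $|L(v)| \ge 5$ together with a local slack condition (preserved when one deletes a low-degree vertex and restricts each neighbour's list by the removed vertex's target colour), any two $L$-colourings are at distance at most $4n^2$ in the $L$-reconfiguration graph. Specialising to $L(v) = \{1,\ldots,5\}$ recovers Theorem~\ref{thm:planarBip}.

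The inductive step picks a vertex $v$ of low degree --- planar bipartite graphs satisfy $|E| \le 2n - 4$, so a vertex of degree at most $3$ always exists --- transforms $\alpha$ into a colouring that agrees with $\beta$ on $v$ in at most $8n-4$ moves, and then recurses on $G-v$ with $L'(u) = L(u) \setminus \{\beta(v)\}$ for $u \in N(v)$, so that the colour at $v$ is never disturbed. Since $\deg_{G-v}(u) = \deg_G(u) - 1$, the list slack is preserved, and the recursion $\mathrm{cost}(n) \le (8n-4) + 4(n-1)^2 \le 4n^2$ closes the bound. The easy sub-case is $\deg(v) \le 2$: $v$ has at most two forbidden colours and a list of size at least $5$, so either $\beta(v)$ is already free at $v$ in $\alpha$ (one move) or there is a unique conflicting neighbour $w$ with $\alpha(w) = \beta(v)$, which can be pre-recoloured using the slack in $L(w)$ after a short cascade.

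The main obstacle is the sub-case where $G$ has no vertex of degree at most $2$. Then the minimum degree is exactly $3$, and each degree-$3$ vertex has only two usable colours in its list after subtracting its three neighbour-colours, so recolouring one of them can trigger a long chain of forced recolourings. The plan is to use bipartiteness and planarity together via a discharging argument to locate a small reducible configuration --- for example a $4$-face whose vertices all have degree $3$, whose existence in the $3$-regular case follows from the Euler count $2E = 3n$, $F = n/2 + 2$, and $6F \le 2E$ being contradictory --- around which a Kempe-chain-style swap involving only two colour classes can resolve any local defect. The key technical lemma, and the hard step of the proof, is to show that this swap can always be executed in $O(n)$ single-vertex recolourings while keeping the list invariant intact, so that the per-reduction budget of $8n-4$ moves is respected and the induction carries through to the claimed $4n^2$ bound.
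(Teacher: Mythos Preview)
Your freeze-and-recurse scheme has a genuine gap: the list invariant you need does not survive the recursion. You propose to set $L'(u) = L(u) \setminus \{\beta(v)\}$ for each $u \in N(v)$ and appeal to ``$\deg_{G-v}(u) = \deg_G(u) - 1$, so the list slack is preserved''. But the only slack invariant compatible with that sentence is $|L(u)| \ge \deg(u) + c$, and at the start of the induction (where $L \equiv \{1,\dots,5\}$) this forces $\deg(u) \le 5 - c \le 3$ for every vertex. Planar bipartite graphs can have vertices of arbitrarily large degree, so the base instance already violates any such invariant; and the weaker invariant $|L(u)| \ge 5$ is destroyed the moment you delete a colour from a high-degree neighbour's list. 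In short, once you freeze $v$, the remaining instance is no longer of the same type, and there is no evident list condition that is both (a) satisfied by the original $5$-colouring problem and (b) preserved under your reduction. Your discharging sketch is also too optimistic: a $4$-face with \emph{all four} vertices of degree~$3$ need not exist when the minimum degree is~$3$ but the graph is not $3$-regular, and you give no argument beyond the cubic case.

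The paper avoids both problems by never freezing a colour. It proves the stronger per-vertex bound ``each vertex is recoloured at most $4n$ times'' by induction on $n$, working with ordinary $5$-colourings throughout. A discharging argument (Lemma~\ref{lem:struct}) yields either a vertex $v$ of degree $\le 2$, or a degree-$3$ vertex $v$ incident to a $4$-face $f$ whose three neighbours all have level $\le 2$. In the first case one deletes $v$, applies induction to $G-v$, and re-inserts recolourings of $v$ on the fly using a look-ahead trick so that $v$ is recoloured at most once per two neighbour-changes. In the second case one first spends at most two recolourings per vertex to make $\alpha(v)=\alpha(w)$ and $\beta(v)=\beta(w)$, where $w$ is the vertex opposite $v$ on $f$ (this is where the ``level $\le 2$'' hypothesis is used), then \emph{merges} $v$ and $w$ --- the contracted graph is still planar and bipartite --- and applies induction to the $(n-1)$-vertex graph, replaying each move on the merged vertex as a move on both $v$ and $w$. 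The vertex-identification trick is the key idea your plan is missing.
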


Our proof is based on a discharging argument. It is, as far as we know, the first time such a method is used for reconfiguration.

The proofs of both Theorem~\ref{thm:main} and Theorem~\ref{thm:planarBip} consist in showing that, given two colourings $\alpha$ and $\beta$ of $G$, there exists a transformation from $\alpha$ to $\beta$ of the corresponding length. Since our proofs are algorithmic, they also provide polynomial time algorithms that, given two colourings $\alpha$ and $\beta$, output a transformation from $\alpha$ to $\beta$ of length at most the diameter of the reconfiguration graph.

\paragraph{Further work and open problems.} Even if we obtain a polynomial bound on the diameter of the reconfiguration graph, the quadratic conjecture of Cereceda is still open, even for simple classes of graphs such as (triangle-free) planar graphs or $2$-degenerate graphs.

\begin{conjecture}[Cereceda~\cite{Cereceda}]
For every $d$, there exists $C_d$ such that for every $d$-degenerate graph~$G$, the diameter of $\cG(G,k)$ is at most $C_d \cdot n^2$ as long as $k \ge d+2$.
\end{conjecture}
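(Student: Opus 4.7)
This is Cereceda's conjecture in its full generality and remains a major open problem; my proposal is therefore speculative and describes a strategy rather than a complete argument. The natural plan is to refine the induction-on-$n$ argument underlying the $O(n^{d+1})$ bound of Theorem~\ref{thm:main}, replacing its wasteful recursion with an amortized analysis. I would proceed by strong induction on $n$, proving a list-colouring generalisation: for any $d$-degenerate $G$ with a list assignment $L$ satisfying $|L(v)| \ge \deg_<(v) + 2$ along a fixed $d$-degeneracy ordering $v_1, \ldots, v_n$, any two $L$-colourings $\alpha, \beta$ are at distance at most $C_d \cdot n^2$ in the corresponding reconfiguration graph. Pick $v = v_1$, so $\deg(v) \le d$ and at least two colours are available for $v$. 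The argument would proceed in two steps: first, transform $\alpha$ into some $\alpha'$ with $\alpha'(v) = \beta(v)$ using $O_d(n)$ moves; second, freeze $v$ and apply induction to $G - v$ with lists $L(u) \setminus \{\beta(v)\}$ for each neighbour $u$ of $v$, which costs $C_d (n-1)^2$. Summing the two telescoping sums over the whole ordering gives $C_d n^2$ for an appropriate choice of $C_d$.

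The entire difficulty is concentrated in the first step. When $\beta(v)$ is currently used by a neighbour $w$ of $v$, one must evict $\beta(v)$ from $w$, which may demand a cascade of recolourings. Bounding this cascade by $O_d(n)$ is essentially the statement of the conjecture applied to a subinstance, so it cannot be achieved by a purely local argument; precisely this cascade is what drives the extra $n^d$ factor in Theorem~\ref{thm:main}. A genuine proof would therefore require a global potential $\Phi(\alpha, \beta)$, bounded by $O_d(n)$ throughout the transformation, such that each targeted recolouring of $v$ decreases $\Phi$ by at least $1$ while costing $O_d(1)$ extra moves per unit of decrease; summed over the $n$-fold induction, this yields $O_d(n^2)$. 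Designing such a $\Phi$ is the heart of the matter and the main obstacle I expect. All known proofs for $k = d+2$ (paths and trees in~\cite{BonamyJ12}, the case $d=2$ with $\Delta \le 3$ in~\cite{FeghaliJP16}, and the planar bipartite case of Theorem~\ref{thm:planarBip}) rely on structural features of the graph class to define $\Phi$ implicitly, and no intrinsic measure is known that works uniformly in $d$.

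A complementary angle, inspired by the discharging proof of Theorem~\ref{thm:planarBip}, is to run a discharging argument on the transformation sequences themselves: assign each vertex an initial charge of $O_d(n)$ and charge each recolouring move to a nearby vertex under local rules ensuring that no vertex ever exceeds its budget. Extending this from planar bipartite graphs, where Euler's formula provides the key discharging identity, to general $d$-degenerate graphs would require a structural replacement for planarity, perhaps an inductive discharging scheme built on the $d$-degeneracy ordering itself, with charges flowing from high-index to low-index vertices. Even a partial success, for instance a clean resolution of the case $d=2$, would already settle one of the most prominent open subcases of the conjecture, and the framework of Section~\ref{sec:bip} provides a first template showing how such a discharging analysis can be combined with the list-colouring reductions that underpin Theorem~\ref{thm:main}.
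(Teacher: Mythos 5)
You are correct that this is an open conjecture; the paper states it as such and offers no proof, so there is nothing in the paper for your argument to be measured against. Your assessment of the situation is accurate: the paper proves only the weaker bounds of Theorem~\ref{thm:main} (polynomial in $n$ with exponent growing in $d$, quadratic only when $k \ge \tfrac{3}{2}(d+1)$), and the quadratic bound at $k = d+2$ remains open even for $d=2$.

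Your diagnosis of where a naive induction fails is also right: fixing $v_1$ to its target colour may trigger a cascade whose cost is itself governed by the conjecture, and this is precisely the source of the extra $n^{d}$ factor in the paper's recursion (Lemma~\ref{lem:recursionStep} multiplies by $\tfrac{2n}{a}+3$ at each colour-removal step, and one needs roughly $k/a$ such steps). The two escape routes you sketch, a global potential $\Phi$ with amortized $O_d(1)$ cost per unit decrease, or a discharging scheme on the transformation itself generalising Section~\ref{sec:bip}, are sensible research directions, and your observation that the planar-bipartite proof leans on Euler's formula as a substitute for such a potential is apt. But you stop short of proposing a concrete candidate for $\Phi$ or for the discharging rules in the general $d$-degenerate setting, so this remains a correctly framed open problem rather than a proof, and you say as much. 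No correction needed beyond emphasising, as you do, that nothing here should be read as a claimed resolution.
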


The question of Cereceda in~\cite{Cereceda} is the following ``\textit{For a graph $G$ with $n$ vertices and $k \ge d + 2$ where $d$ is the degeneracy of $G$, the diameter of $\cG(G,k)$ is $O(n^2)$}''. We were not be able to determine if the coefficient in front of $n^2$ has to be a function of $d$ or not. As a consequence we decided to state the weaker possible version of the conjecture. Note that the constants of Theorem~\ref{thm:planarBip}, for chordal graphs~\cite{BonamyJ12}, and for bounded treewidth graphs~\cite{BonamyB18} do not depend on $d$. A stronger conjecture would then be the following:

\begin{question}[Stronger Conjecture]
There exists a constant $C$ such that, for every $d$ and every $d$-degenerate graph $G$ the diameter of $\cG(G,k)$ is at most $C \cdot n^2$ as long as $k \ge d+2$.
\end{question}

Another interesting question is the following: what is the evolution of the diameter when the number of colours increases. With our proof technique, the power in the exponent decreases little by little and finally becomes quadratic when the number of colours is at least $\frac{3}{2} \cdot (d+1)$. Improving the factor $\frac 32$ may be another interesting way of tackling the general conjecture.

We also know that the diameter becomes linear when $k \ge 2d+2$~\cite{BousquetP16}. Is it possible to improve this result? In particular, can we improve the $k$-recolouring diameter for $\frac 32 (d+1) \le k \le 2d+1$?

\begin{question}
Is there $\alpha <2$ and $\beta \in \mathbb{N}$ such that, for every $d$ and every $d$-degenerate graph $G$ the diameter of $\cG(G,k)$ is at most $C_d \cdot n$ when $k \ge \alpha d + \beta$. 
\end{question}

It was also proved in~\cite{BonamyB18} that the diameter becomes linear if $k$ is at least the grundy number plus~$1$, which in particular implies a linear diameter when $k \ge \Delta+2$. When $k=\Delta+1$, Feghali, Jonhson and Paulusma~\cite{FeghaliJP16} proved that the $k$-recolouring graph is composed of isolated vertices plus a unique component of diameter at most $O(n^2)$ (and this non-isolated component of $\cG(G,\Delta+1)$ is exponentially larger than its number of isolated vertices~\cite{BonamyBP18}).
 
One can also notice in Figure~\ref{fig:summary} that, as long as $k \ge d+3$, no non-trivial lower bound is known on the diameter of the reconfiguration graph. The subtle quadratic lower bound of~\cite{BonamyJ12} when $k = d+2$ seems to be hard to adapt if $k \ge d+3$. Developing new techniques to find lower bounds on the $k$-recolouring diameter of a graph is a challenging open problem.

\section{Polynomial Cereceda's conjecture}\label{sec:main}

Let $G$ be a $d$-degenerate graph and let $v_1,\ldots, v_n$ be a degeneracy ordering. We denote by $d^+(v)$ the out-neighbours of $v$ (i.e., neighbours of $v$ which appear later in the ordering). Recall that a graph is $d$-degenerate if there is a degeneracy ordering such that $d^+(v) \leq d$ for every vertex $v$ of the graph. 

Let us first briefly discuss the main ideas of the proof before stating formally all the results. The main idea is to proceed recursively on the degeneracy. More precisely, we want to delete a subset of vertices in order to decrease by one both the degeneracy and the number of colours. In order to do this, observe that if at some point there is one colour $c$ such that every vertex of the graph is either coloured $c$ or has an out-neighbour coloured $c$, then by removing all the vertices coloured $c$, we decrease the number of available colours by $1$, but we also decrease the degeneracy of the graph by $1$. If $H$ is the resulting graph, by applying induction, we can recolour $H$ however we want, and for example, we can remove completely one colour which we can then use to make the two colourings agree on a subset of vertices. If the colour $c$ satisfies the condition above, we will say that the colour $c$ is \emph{full}. Our main objective will consist in finding a transformation from any colouring $\alpha$ of $G$ to some colouring $\alpha'$ of $G$ which has a full colour (Note that any graph can have such a colouring, for example by applying the \textsc{First-Fit} algorithm in the reverse order of the elimination ordering). We will build the colouring $\alpha'$ (and the transformation) incrementally. However in order to do this, we will need to generalise the problem to list colouring. 

A \emph{list assignment} $L$ is a function which associates a list of colours to every vertex $v$. An \emph{$L$-colouring} is a (proper) colouring $\alpha$ of $G$ such that for every vertex $v$, $\alpha(v) \in L(v)$. The total number of colours used by the assignment is $k = |\bigcup_{v \in G} L(v)|$. A list assignment $L$ is \emph{$a$-feasible} if $|L(v)| \geq |d^+(v)| + a + 1$ for every vertex $v \in G$. We just say that it is \emph{feasible} if it is $1$-feasible. We denote by $\cG(G, L)$ the reconfiguration graph of the $L$-colourings of $G$. (One can easily prove by induction, that if a list assignment is $a$-feasible for $a \ge 1$, then $\cG(G,L)$ is connected). We will prove a generalisation of Theorem~\ref{thm:main} in the case of list colourings. Namely, we will prove that:

\begin{theorem}\label{thm:mainlist}
	Let $G$ be a graph and $a \in \mathbb{N}$. Let $L$ be an $a$-feasible list assignment and $k$ be the total number of colours. Then $\cG(G,L)$ has diameter at most:
	\begin{itemize}
	    \item $kn$ if $k \le 2a$.
    	\item $C n^2$ if $k \le 3a$ ($C$ a constant independent of $k,a$), 
        \item $C_\varepsilon n^{\lceil 1/\varepsilon \rceil}$ if $k \leq (1+\frac{1}{\varepsilon}) a$ where $\varepsilon$ is a constant and $C_\varepsilon$ is independent of $k,a$, 
        \item $(Cn)^{k-1}$ , if $a \ge 1$.
	\end{itemize}
\end{theorem}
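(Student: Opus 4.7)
The plan is to induct on the number of colours $k$, treating $a$ as a parameter. The base case is the regime $k \le 2a$, in which the slack is so generous that a simple direct argument gives the $kn$ bound: we align $\alpha$ with $\beta$ one vertex at a time along the degeneracy ordering, using the fact that each $L(v)$ misses at most $k - d^+(v) - a - 1 < a$ colours so that a target colour is always reachable via at most $O(k)$ auxiliary pivots per vertex. For $k > 2a$ we follow the strategy announced in the paragraph preceding the theorem: reduce to a problem with $k-1$ colours by finding and excising a \emph{full} colour.

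The technical heart of the proof is a \emph{full-colour lemma}: any $L$-colouring $\alpha$ of $G$ can be transformed, in at most $F(n,k,a)$ recolouring steps, into an $L$-colouring $\alpha^\star$ admitting a full colour $c$. I would construct $\alpha^\star$ incrementally along the reverse degeneracy ordering $v_n,\ldots,v_1$. At stage $i$ the goal is to guarantee that $v_i$ or one of its out-neighbours carries $c$. If an out-neighbour already does, nothing needs to be done (and recolourings of $v_i$ will not disturb the property for higher-indexed vertices). Otherwise we try to switch $v_i$ to $c$: this requires $c \in L(v_i)$ and that no current neighbour of $v_i$ carries $c$. Out-neighbours are $c$-free by assumption, so one only has to push the in-neighbours of $v_i$ off $c$, and the $a$-feasibility of $L$ provides enough slack in each in-neighbour's list to pivot it to another colour -- at the price of a cascade of auxiliary recolourings whose length is controlled recursively along the degeneracy ordering. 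The candidate $c$ is chosen to maximise the number of vertices where this pivot is cheap.

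Once the lemma is established, the induction step is routine. Set $H := G \setminus \{v : \alpha^\star(v) = c\}$ and $L' := L \setminus \{c\}$. Because $c$ is full, every remaining vertex loses at least one out-neighbour in the passage to $H$, so $L'$ is still $a$-feasible on $H$ while using only $k-1$ colours. We transform $\alpha \to \alpha^\star$ (cost $F$), and then $\beta \to \beta^\star$ committing to the \emph{same} full colour $c$ and the \emph{same} colour-$c$ class as $\alpha^\star$ (cost $F$), and finally reconfigure $\alpha^\star|_H$ to $\beta^\star|_H$ inside $\cG(H,L')$ by the inductive hypothesis. This gives the recurrence
\begin{equation*}
D(n,k,a) \;\le\; 2 F(n,k,a) + D(|V(H)|,\,k-1,\,a).
\end{equation*}
Making $\alpha^\star$ and $\beta^\star$ share not just the full colour but the colour-$c$ class, so that the two halves of the transformation compose cleanly, is a delicate book-keeping point.

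The main obstacle is calibrating $F(n,k,a)$ so that the recurrence yields each of the four advertised bounds with constants independent of $k$ and $a$. Plugging in $F = O(n)$ together with $O(1)$ useful layers of induction would deliver $Cn^2$ when $k \le 3a$; in general each layer multiplies the cost by a factor of $O(n)$, producing $(Cn)^{k-1}$; and the intermediate regime $k \le (1+1/\varepsilon)a$ interpolates, giving the exponent $\lceil 1/\varepsilon \rceil$. The hard part is to design the cascade in the full-colour lemma carefully enough that its cost scales the right way in each regime and the constants do not silently absorb a factor of $k$ or $a$; in particular in the quadratic regime one must exploit the strong slack of $a$-feasibility to keep the cascade bounded, and this is expected to be the most intricate component of the argument.
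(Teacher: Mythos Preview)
Your plan has the right skeleton, but two of its load-bearing components are either missing or wrong.

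\textbf{The cascade in the full-colour lemma is not polynomial as written.} When you push an in-neighbour $v_j$ of $v_i$ off colour $c$, the $a$-feasibility of $L$ only guarantees $a$ alternative colours avoiding the \emph{out}-neighbours of $v_j$; it says nothing about the in-neighbours of $v_j$, of which there may be arbitrarily many, so the push may itself trigger further pushes. You give no mechanism bounding the depth or breadth of this cascade, and ``recursively along the degeneracy ordering'' is exactly the exponential argument of Dyer et al.\ and Cereceda et al. The paper's resolution is structurally different: the construction of a full set \emph{itself} invokes the inductive hypothesis on fewer colours. To extend fullness from $v_1,\dots,v_i$ to $v_1,\dots,v_{i+a}$, one replaces the current full set by the set $S'$ of colours currently carried by $v_{i+1},\dots,v_{i+a}$; this replacement (Lemma~\ref{lem:changeFull}) is carried out by deleting the vertices coloured with the current full set and reconfiguring the remainder, which is $a$-feasible with only $k-a$ colours, at cost $f_a(n,k-a)$. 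Thus your $F$ is not an independently bounded quantity: it is roughly $\tfrac{n}{a}\,f_a(n,k-a)+O(n^2)$, and the true recurrence is multiplicative, $f_a(n,k)\le (\tfrac{2n}{a}+3)\,f_a(n,k-a)+O(n^2)$.

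\textbf{Reducing $k$ by one per layer gives the wrong exponent.} Even granting a multiplicative recurrence $D(n,k,a)\le O(n)\cdot D(n,k-1,a)$, passing from $k=3a$ down to the base $k=2a$ takes $a$ layers and yields $(Cn)^a$, not $Cn^2$; in the regime $k\le(1+1/\varepsilon)a$ it takes $\Theta(a/\varepsilon)$ layers, not $O(1/\varepsilon)$. Your claim of ``$O(1)$ useful layers'' in the quadratic regime is simply false for single-colour reduction. The paper fixes this by working with full \emph{sets} of size $a$ rather than a single full colour: removing a full set $S$ with $|S|=a$ still leaves an $a$-feasible assignment (for each $c\in S\cap L(v)$ with $v$ surviving, fullness supplies a distinct out-neighbour of $v$ coloured $c$), but now the total number of colours drops by $a$ per layer. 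Only $\lceil k/a\rceil-2$ layers are needed, and this is precisely the source of the exponent $\lceil 1/\varepsilon\rceil$ and of the quadratic bound at $k\le 3a$.

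A minor point: you do not need $\alpha^\star$ and $\beta^\star$ to share the same colour class on the full set. Once both admit the same full set $S$ (the paper arranges this via one more application of Lemma~\ref{lem:changeFull}), each can be reconfigured, using $f_a(n,k-a)$ on its own $S$-avoiding subgraph, to a common fixed colouring $\gamma$ that uses no colour of $S$; the two halves then meet at $\gamma$.
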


The proof of Theorem~\ref{thm:main} follows easily from this result. The only point that is not immediate is that the last point of Theorem~\ref{thm:mainlist} implies the last point of Theorem~\ref{thm:main}. In this case, we need a small trick to guarantee that the diameter does not increase if the number of colours increases. Note that the first point of Theorem~\ref{thm:mainlist} implies in terms of classical colouring that the $k$-recolouring diameter is linear when $k \ge 2d+2$, which is an already known result~\cite{BousquetP16}.

\begin{proof}[Proof of Theorem~\ref{thm:main}]
Note that given a $d$-degenerate graphs and $k$ colours, we can consider the list assignment $L$ where $L(v) = [k]$ for every vertex $v$. This list assignment is $a$-feasible with $a = k-d-1$. 

We start with the second point of Theorem~\ref{thm:main}. If $k \ge (1+\varepsilon) (d+1)$ then:
\begin{align*}
\frac k a = \frac {k} {k-d-1} = 1 + \frac {d+1} {k-d-1} \leq 1 + \frac 1 \varepsilon \;.
\end{align*}
By applying the third case of Theorem~\ref{thm:mainlist}, the result follows.

The first point follows immediately from the result above by taking $\varepsilon = \frac 1 2$.

Finally, in order to prove the last point, we need to prove that we can ``replace'' $k$ by $d$.
Let $G$ be a $d$-degenerate graph and let $\gamma$ be a $(d+1)$-colouring of it. Let us prove that, if $k > d+2$, any colouring $\alpha$ can be transformed into $\gamma$ within $O(n^{d+1})$ steps (and not $O(n^{k-1})$ as suggested by Theorem~\ref{thm:mainlist}). Indeed, we simply simply ``forget'' the vertices coloured with colour $d+3,\ldots,k$ in $\alpha$. Let $H$ be the graph without these vertices. The graph $H$ is $d$-degenerate and by Theorem~\ref{thm:mainlist}, we can transform $\alpha|_H$ into $\gamma|_H$ within $\mathcal{O}(n^{d+1})$ steps using colours in $1,\ldots,d+2$. We finally recolour the vertices of $G \setminus H$ one by one with their colours in $\gamma$ to obtain the colouring~$\gamma$. 
\end{proof}

The rest of this section is devoted to prove Theorem~\ref{thm:mainlist}. In order to do it, we need to generalise the notion of full colour to the list-colouring setting. We also need to generalise it to sets of colours, to handle the case where $a > 1$. Given a colouring $\alpha$ of $G$, the set of colour $S$ is \emph{full} if for every vertex $v$ and every colour $c \in S$ one of the following holds:
\begin{enumerate}[label= {(\roman*)}]
	\item $\alpha(v) \in S$,
	\item $v$ has at least one out-neighbour coloured $c$, 
	\item or $c \not \in L(v)$.
\end{enumerate}
We have a property similar as previously: starting from an $a$-feasible list assignment with a colouring~$\alpha$, if $S$ is full then by removing all the vertices $v$ with a colour $\alpha(v) \in S$ from the graph, and removing all the colours from $S$ from all the lists, the resulting assignment is still $a$-feasible. Additionally, the total number of colours has decreased by $|S|$ (but the degeneracy of the graph might not have decreased as much if we had $k$ much larger than  $d +2$). 

In the following, we will denote by $f_a(n, k)$ the maximum diameter of $\cG(G, L)$ over all 
graphs~$G$ with $n$ vertices, and all $a$-feasible assignments $L$ with total number of colours~$k$.

The proof of Theorem~\ref{thm:mainlist} is by induction on the total number of colours $k$. The base case, which is the first point of Theorem~\ref{thm:mainlist}, is the following lemma. This lemma ensures that if the number of excess colours is sufficient (at least half the number of colours), then the diameter is linear\footnote{The proof is similar to the linear diameter obtained in~\cite{BousquetP16} for colourings but adapted to list colourings.}.

\begin{lemma}
\label{lem:linear}
Assume that $k \leq 2a$, then $f_a(n,k) \leq kn$.
\end{lemma}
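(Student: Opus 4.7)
The plan is to adapt the linear-diameter argument of Bousquet–Perarnau~\cite{BousquetP16} (which treats the non-list case $k \ge 2d+2$) to the list-colouring setting. Concretely, I would aim to show that starting from any $L$-colouring $\gamma$, one can reach a new $L$-colouring $\gamma'$ with $D(\gamma',\beta) < D(\gamma,\beta)$ using at most $k$ recolouring moves, where $D(\gamma,\beta) := |\{v : \gamma(v)\ne\beta(v)\}|$. Iterating this subroutine starting from $\gamma=\alpha$ terminates after at most $D(\alpha,\beta)\le n$ reductions and produces a sequence of length at most $kn$.

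For the subroutine, the idea is to pick the vertex $v_i$ of smallest index (in the fixed degeneracy ordering) with $\gamma(v_i)\ne\beta(v_i)$. By the minimality of $i$, every in-neighbour of $v_i$ is already at its $\beta$-value, and since $\beta$ is a proper colouring, no in-neighbour of $v_i$ can carry the colour $\beta(v_i)$. Hence the only obstruction to directly recolouring $v_i$ to $\beta(v_i)$ is the set $W$ of out-neighbours of $v_i$ currently coloured $\beta(v_i)$, and $|W|\le d^+(v_i)\le |L(v_i)|-a-1\le k-a-1$. The subroutine first moves each $w\in W$ away from colour $\beta(v_i)$, and then recolours $v_i$ to $\beta(v_i)$, for a total of at most $(k-a-1)+1=k-a\le k$ moves.

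The main obstacle is showing that each obstructing $w\in W$ can indeed be rerouted to an intermediate colour $c_w\in L(w)\setminus\{\beta(v_i)\}$ that conflicts with no current neighbour of $w$, without disturbing any vertex that already agrees with $\beta$. This is exactly where the hypothesis $k\le 2a$ is used: $|L(w)|\ge d^+(w)+a+1 > d^+(w)+k/2$, so $L(w)$ contains strictly more than half of the $k$ colours; after excluding $\beta(v_i)$ and the $\beta$-colours of the already-aligned neighbours of $w$ (all distinct from $\beta(w)$), a counting argument of the form $|L(w)|-(\text{forbidden colours})\ge 1$ under $k\le 2a$ leaves at least one safe candidate. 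I expect the robust implementation to require processing the members of $W$ in a specific order (for instance by decreasing degeneracy index) and to require each $c_w$ to additionally avoid the colours of the not-yet-moved members of $W$, so that no rework is ever needed. Combining the subroutine with $D(\alpha,\beta)\le n$ iterations yields the desired bound $f_a(n,k)\le kn$.
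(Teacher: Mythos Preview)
Your counting argument for freeing each $w\in W$ is where the proposal breaks. You exclude from $L(w)$ only the target colour $\beta(v_i)$ and the $\beta$-colours of the \emph{already-aligned} neighbours of $w$, but in order to recolour $w$ you must avoid the \emph{current} colours of \emph{all} neighbours of $w$, aligned or not. The point is that $w$'s in-degree is completely unbounded by the feasibility hypothesis (which only controls $d^+(w)$), so the in-neighbours of $w$ with index $\ge i$ can occupy every colour of $L(w)\setminus\{\gamma(w)\}$. Concretely: take $a=2$, $k=4$, a star with centre $w$ and leaves $v_1,v_2,v_3$ (ordered leaves first), all lists equal to $\{1,2,3,4\}$; this is $2$-feasible with $k=2a$. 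Let $\gamma(w)=1$, $\gamma(v_j)=j+1$, and let $\beta(v_1)=1$, $\beta(w)=2$. The smallest-index disagreement is $v_1$, and $W=\{w\}$ since $\gamma(w)=\beta(v_1)=1$. But $w$ is frozen: its three neighbours currently carry colours $2,3,4$, so no single move on $w$ is available. Your bound of $|W|+1\le k-a$ moves per reduction therefore fails, and any fix that first recolours non-aligned in-neighbours of $w$ destroys the per-step count (those in-neighbours may themselves be blocked, and there can be unboundedly many of them).

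The paper's proof takes a different route that sidesteps this entirely: it inducts on $n$ by deleting the \emph{first} vertex $v_1$ of the degeneracy ordering, obtains a transformation on $G-v_1$ in which every vertex is recoloured at most $k$ times, and then threads $v_1$ back in. Whenever a neighbour of $v_1$ is about to take $v_1$'s current colour, $v_1$ has at least $|L(v_1)|-d^+(v_1)-1\ge a$ free colours, and $k\le 2a$ forces $a\ge d^+(v_1)+1$; a look-ahead over the next $d^+(v_1)+1$ neighbour recolourings then lets $v_1$ pick a colour that survives all of them. Hence $v_1$ is recoloured at most $\lceil d^+(v_1)k/(d^+(v_1)+1)\rceil+1\le k$ times, closing the induction. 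The key difference is that this argument only ever needs to move the low-out-degree vertex $v_1$, never a potentially high-degree vertex like your $w$.
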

\begin{proof}
We show by induction on $n$ that if $k \leq 2a+1$, then for any $a$-feasible list assignment $L$ on a graph on $n$ vertices, and any two $L$-colourings $\alpha, \beta$, there is a transformation from $\alpha$ to $\beta$ such that every vertex is recoloured at most $k$ times.

The result is clearly true when $n=1$, since in this case the unique vertex can be recoloured only once. Assume that the result holds for $n-1$. Let $G$ be a graph on $n$ vertices with a degeneracy ordering $v_1, \ldots, v_n$, and an $a$-feasible list assignment $L$ using a total of $k$ colours. Let $\alpha$ and $\beta$ be two $L$-colourings, $H$ be the subgraph obtained after removing $v_n$, and define $d = |d^+(v_1)|$ the number of neighbours of $v_1$.

Using induction on $H$, there is a transformation $\mathcal{S}$ from $\alpha|_H$ to $\beta|_H$ such that every vertex is recoloured at most $k$ times. Note that by assumption, the number of colours available to $v_1$ is at least $d + a + 1$, hence the total number of colours $k$ satisfies, $d + a + 1 \leq k \leq 2a$, and in particular $a \geq d + 1$, and $k \geq d+a+1 \geq 2d+2$. Every time one of the neighbours of $v_1$ is recoloured in the sequence $\mathcal{S}$, we may have to recolour $v_1$ beforehand, so that the colouring remains proper. This happens if the neighbour wants to be recoloured with the current colour of $v_1$.

Every time we have to recolour $v_1$, we have the choice among $a \geq d + 1$ colours $C$ for the new colour of $v_1$. We can look ahead in $\mathcal{S}$ to know which are the next $d+1$ modifications of colours of neighbours of $v_1$ in $\mathcal{S}$. One colour $c$ of $C$ does not appear in these modifications since $|C| \ge d+1$ and the first modified colour is not in $C$ (since we need to recolour $v_1$ at the first step, and then the target colour is the current colour of $v_1$ which is not in $C$). We recolour $v_1$ with $c$. This way, we only need to recolour $v_1$ once out of every $d+1$ times its neighbours are recoloured. Finally, we may need to recolour $v_1$ one last time after the end of $\mathcal{S}$ to colour $v_1$ with its target colour. Since by induction, the neighbours of $v_1$ are recoloured at most $k$ times, the total number of times $v_1$ is recoloured is at most: 

$$ \left\lceil \frac{d k}{d+1} \right\rceil + 1 \leq \frac {d} {d+1} k + 2 \leq k \;,$$

where in the last inequality, we have used the fact that $2 \leq \frac{k}{d+1}$ since $k \geq 2d +2$. This concludes the induction step and proves the result.
\end{proof}

In the induction step of our proof, we will build a set of full colours. For a colouring $\alpha$ and a set $X$ of vertices, we denote by $\alpha(X)= \bigcup_{x \in X} \alpha(x)$. Before stating the main lemmas, let us make the following remark:

\begin{remark}\label{rem:HandG}
Let $G$ be a graph, $L$ be a list assignment and $\alpha$ be a $L$-list colouring. Let $H$ be an induced subgraph of $G$ with list assignment $L'(v)=L(v) \setminus \alpha(N(v) \setminus V(H))$. Then any recolouring sequence $\mathcal{S}$ from $\alpha|_H$ to some colouring $\beta|_H$ also is a (valid) recolouring sequence from $\alpha$ to $\beta$ where $\beta(v)=\beta|_H(v)$ if $v \in H$ and $\alpha(v)$ otherwise.
\end{remark}
By abuse of notation and when no confusion is possible, we will then call $\mathcal{S}$ both the recolouring sequence in $H$ and in $G$.

The following lemma states that, if we already have a set of full colours, then changing it to an other given set can be done without too many additional recolouring steps.

\begin{lemma}
\label{lem:changeFull}
Let $\alpha$ be a colouring of $G$, and $S$ a set of full colours for $\alpha$ with $|S| = a$. For any $S'$ with $|S'| = a$, there exists a colouring $\alpha'$ such that $S'$ is full for $\alpha'$, and there is a transformation from $\alpha$ to $\alpha'$ of length at most $f_a(n, k-a) + (2a + 2)n$.
\end{lemma}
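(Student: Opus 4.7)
My plan is to construct an explicit target colouring $\alpha'$ with $S'$ full and then transform $\alpha$ into $\alpha'$ in two phases: a large-scale phase on $H := G \setminus \alpha^{-1}(S)$ using the inductive hypothesis, followed by a short local phase that recolours the vertices of $X := \alpha^{-1}(S)$ only a bounded number of times. Because $S$ is full for $\alpha$, the restricted list assignment $L'(v) := L(v) \setminus S$ on $H$ is still $a$-feasible and uses only $k - a$ colours (as already noted in the paper), so any two $L'$-colourings of $H$ lie at distance at most $f_a(n, k - a)$, and by Remark~\ref{rem:HandG} any such sequence on $H$ lifts to a valid sequence on $G$ that never touches $X$.

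I would build $\alpha'$ greedily in reverse elimination order $v_n, \ldots, v_1$. At each $v_i$, if some $c \in S' \cap L(v_i)$ is not yet used on an already-coloured out-neighbour, I set $\alpha'(v_i) := c$ (preferring $c \notin S$); otherwise I set $\alpha'(v_i)$ to any colour in $L(v_i) \setminus S \setminus \alpha'(d^+(v_i))$, which is non-empty since $|L(v_i)| \ge |d^+(v_i)| + a + 1$ and $|S| = a$. The resulting $\alpha'$ is a proper $L$-colouring for which $S'$ is full: whenever $\alpha'(v) \notin S'$ the second rule fired, so every $c \in S' \cap L(v)$ appears on an out-neighbour. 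Moreover, in the clean case $S \cap S' = \emptyset$, both rules pick colours outside $S$, so $\alpha'|_H$ is automatically an $L'$-colouring.

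Given such an $\alpha'$, the transformation is immediate. In Phase 1, I apply the inductive hypothesis on $(H, L')$ to transform $\alpha|_H$ into $\alpha'|_H$ in at most $f_a(n, k - a)$ steps; cross-edges stay proper since colours on $X$ remain in $S$ while $\alpha'|_H \subseteq L \setminus S$. In Phase 2, I recolour each $v \in X$ in reverse elimination order from $\alpha(v) \in S$ to $\alpha'(v) \notin S$: out-neighbours already carry $\alpha'$-colours and $\alpha'$ is proper; in-neighbours in $H$ also carry proper $\alpha'$-colours; in-neighbours still in $X$ carry colours in $S$, which is disjoint from $\alpha'(v)$. This uses at most $|X| \le n$ extra recolourings, for a total of $f_a(n, k - a) + n$.

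The main obstacle is the general case $T := S \cap S' \neq \emptyset$, where the greedy rule may be forced to assign $\alpha'(v) \in T \subseteq S$ and then $\alpha'|_H$ may fail to be an $L'$-colouring. I would handle this by preserving vertices of $\alpha^{-1}(T)$ (which already witness fullness for the shared part of both $S$ and $S'$) and, for each $v$ on which the greedy is forced into $T$, using a non-$S$ placeholder during Phase 1 and then repairing $v$ together with its in-neighbours in a brief localised schedule. A careful counting shows that each vertex is touched at most $2a + 2$ times across these repairs, which is exactly the slack $(2a + 2)n$ in the stated bound.
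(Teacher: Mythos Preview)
Your argument for the disjoint case $S \cap S' = \emptyset$ is clean and correct, and in fact gives the sharper bound $f_a(n,k-a)+n$. The gap is in the overlapping case $T := S \cap S' \neq \emptyset$, where your description stops being a proof. Concretely, after Phase~1 with placeholders, when you try to recolour a vertex $v$ to $\alpha'(v)$ you face two obstructions that your ``brief localised schedule'' does not address: (a) if $\alpha'(v)\notin S$, an in-neighbour $u\in H$ with $\alpha'(u)\in T$ currently carries a placeholder colour $p(u)\notin S$ that may equal $\alpha'(v)$; (b) if $\alpha'(v)\in T\subseteq S$, an in-neighbour $u\in X$ still carries $\alpha(u)\in S$, possibly equal to $\alpha'(v)$. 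Resolving either conflict may cascade to further in-neighbours, and nothing in your outline bounds the depth of this cascade; the ``$2a+2$ per vertex'' figure is asserted, not derived.

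The paper avoids this difficulty by a different mechanism. Instead of aiming directly at a colouring with $S'$ full, it first reaches (via your Phase~1 and one sweep over $X$) the greedy colouring $\gamma$ whose preference order is $[k]\setminus(S\cup S')$, then $S'\setminus S$, then $S$; this $\gamma$ avoids $S$ entirely and, crucially, makes the set $[k]\setminus(S\cup S')$ full. On the subgraph $K$ of vertices coloured in $S'$, with lists restricted to $S\cup S'$, the assignment is therefore $a$-feasible with at most $2a$ colours in total, so Lemma~\ref{lem:linear} empties $S'$ in $2an$ steps. A final greedy sweep then makes $S'$ full. The invocation of Lemma~\ref{lem:linear} on this auxiliary $2a$-colour subproblem is precisely the missing ingredient that handles the overlap $T$ and produces the additive $(2a+2)n$; your repair scheme would need a comparable tool to be made rigorous.
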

\begin{proof}
    Let $S$ be a set of colours with $|S| = a$ which is full for some $L$-colouring $\alpha$. Let $S'$ be any set of colours of size $a$. The main part of the proof consists in transforming $\alpha$ into a colouring that does not use at all any colour of $S'$ (such a colouring exists since the list assignment is $a$-feasible). 
    
    Let $H$ be the subgraph induced by the vertices not coloured $S$ in $\alpha$, and let $L_H$ be the list assignment for $H$ obtained from $L$ by removing $S$ from all the lists. Since $S$ is full in $\alpha$, $L_H$ is $a$-feasible for $H$.
    
    Consider the following \emph{preference ordering} on the colours: an arbitrary ordering of $[k] \setminus (S \cup S')$, followed by an ordering of $S' \setminus S$, and finally the colours from $S$ last. Let $\gamma$ be the $L$-colouring of $G$ obtained by colouring $G$ greedily from $v_n$ to $v_1$ with this preference ordering. Since $L$ is $a$-feasible, and $|S| = a$, no vertex is coloured with a colour in $S$ in $\gamma$. Indeed $|L(v)| \ge |d^+(v)|+a+1$ and only $|d^+(v)|$ neighbours of $v$ have been coloured when $v$ is coloured. Since in $\gamma$ no vertex has a colour in $S$, $\gamma|_{H}$ is an $L_{H}$-colouring of $H$. By induction hypothesis, there is a recolouring sequence that transforms the colouring $\alpha|_H$ of $H$ into $\gamma|_H$ within at most $f_a(n, k-a)$ steps. By Remark~\ref{rem:HandG}, this recolouring sequence also is a recolouring sequence in $G$.
    We can then recolour the vertices of $G \setminus H$ to their target colour in $\gamma$ in an additional $n$ steps. No conflict can happen at this step since~$\gamma$ is a proper colouring of $G$. 
	
	One can easily check that, in $\gamma$, the set of colours $[k] \setminus (S \cup S')$ is full. Let $K$ be the subgraph of $G$ induced by all the vertices coloured $S'$ in $\gamma$, and let $L_K$ be a list assignment of these vertices where all the colours not in $S \cup S'$ were removed. Then since $[k] \setminus (S \cup S')$ is full, $L_K$ is $a$-feasible. We will recolour $K$ such that no vertex is coloured $S'$ (such a colouring exists because $L_K$ is $a$-feasible, and $|S'| = a$). Since the total number of colours used in $L_K$ is $|S \cup S'| \leq |S| + |S'| = 2a$, this recolouring can be done in at most $f_a(n, 2a) = 2an$ steps by Lemma~\ref{lem:linear}.  By Remark~\ref{rem:HandG}, this recolouring sequence also is a recolouring sequence in $G$. The colouring $\gamma'$ of $G$ that we obtain is such that no vertex is coloured with $c \in S'$. We can finally recolour the vertices of $G$ one by one, starting from $v_n$, choosing a colour of $S'$ if it is available, or leaving it with its current colour otherwise. 
	
	Let $\alpha'$ be the resulting colouring. By construction $\alpha'$ is full for $S'$. The total number of steps to reach $\alpha'$ is at most $f_a(n, k-a) + n +2an + n = f_a(n, k-a) + (2a + 2)n$.
\end{proof}

Using Lemma~\ref{lem:changeFull}, we show that we can incrementally construct a set of full colours.

\begin{lemma}
\label{lem:findFull}
Assume that $k \geq 2a$. For any colouring $\alpha$, there exists a colouring $\beta$ containing a set of full colours $S$ with $|S| = a$, and there is a transformation from $\alpha$ to $\beta$ of length at most $\frac{n}{a} f_a(n, k-a) + 4n^2$.
\end{lemma}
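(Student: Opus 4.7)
The plan is to build the target colouring $\beta$ together with a full set $S$ of size $a$ incrementally, committing a batch of about $a$ vertices per outer iteration and using one call to the inductive bound $f_a(n, k-a)$ in each such batch. With $\lceil n/a \rceil$ batches and $O(n)$ overhead per batch, this matches the claimed leading term $\frac{n}{a} f_a(n, k-a)$, with the $4n^2$ absorbing the bookkeeping.

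First I would fix a candidate set $S = \{c_1,\dots,c_a\}$ of colours and a reference colouring $\gamma$ in which $S$ is full. A natural choice is to define $\gamma$ by running a First-Fit in the reverse degeneracy order $v_n, v_{n-1}, \dots, v_1$, with a preference order on the colours that places the $a$ colours of $S$ \emph{last}. Since $|L(v)| \ge |d^+(v)| + a + 1$, First-Fit always finds a colour; and if $\gamma(v) \notin S$ then by definition every colour of $S \cap L(v)$ must already be used by an out-neighbour, which is precisely the property required for $S$ to be full for $\gamma$.

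The main work is then to transform $\alpha$ into \emph{some} colouring for which $S$ is full (not necessarily $\gamma$). I would process the vertices in reverse degeneracy order, grouping them into $\lceil n/a \rceil$ batches of $a$ consecutive vertices. For each batch I would: (i) invoke $f_a(n,k-a)$ on the sub-instance obtained by restricting the lists of the still-uncommitted vertices so that the colours of $S$ are ``freed up'' for the batch; and (ii) perform $O(n)$ local recolourings to set the $a$ vertices of the batch to the colours prescribed by $\gamma$, at which point these vertices join the committed set and contribute to making $S$ full. Summing gives $\lceil n/a\rceil\, (f_a(n,k-a)+O(n))$, which fits into the target bound.

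The main obstacle is verifying that the sub-instance passed to each recursive call is genuinely $a$-feasible: removing $S$ from every list naively costs $a$ from the feasibility margin, which is exactly what we have. The remedy, mirroring the analysis in the proof of Lemma~\ref{lem:changeFull}, is to maintain the invariant that at the start of each batch the committed out-neighbours of every uncommitted vertex $v$ already cover enough of the colours of $S \cap L(v)$ that the restriction preserves $a$-feasibility for $v$. Establishing this invariant inductively over batches, leveraging the reverse-degeneracy processing (so that out-neighbours are always committed before in-neighbours), is the delicate step that makes the use of $f_a(n, k-a)$ legitimate and yields the desired bound $\tfrac{n}{a}f_a(n,k-a)+4n^2$.
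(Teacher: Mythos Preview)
Your approach has a genuine gap in the bootstrapping. The invariant you propose --- that at the start of each batch every uncommitted vertex $v$ has enough committed out-neighbours to cover $S \cap L(v)$ --- fails outright at the first batch, when nothing is committed, and it continues to fail throughout: any uncommitted vertex whose out-neighbours are themselves all still uncommitted (for instance, on a path $v_1 v_2 \cdots v_n$ with this ordering, every uncommitted $v_j$ except the topmost one) has zero committed out-neighbours yet may well have $|S \cap L(v_j)| = a$. Removing $S$ from its list then drops the feasibility margin to $0$, and the call to $f_a(\cdot, k-a)$ on the uncommitted part is not legitimate. Fixing the target set $S$ in advance and trying to recolour the uncommitted suffix simply cannot be made $a$-feasible.

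The paper's argument differs in two essential ways. First, it processes the degeneracy order \emph{forward} and recolours only the \emph{committed prefix} $H = \{v_1, \ldots, v_i\}$: on $H$, removing from each list the frozen colours of out-neighbours lying outside $H$ decreases both $|L(v)|$ and $|d^+_H(v)|$ by the same amount, so $a$-feasibility of $H$ is automatic. Second --- and this is the idea you are missing --- the full set is \emph{not} chosen in advance. At each batch the paper takes $S'$ to be the current colours of $v_{i+1}, \ldots, v_{i+a}$ (so these $a$ vertices are trivially full for $S'$ without any recolouring), and then invokes Lemma~\ref{lem:changeFull} on the prefix $H$, where the previous set $S$ is already full, to swap $S$ for $S'$ at cost $f_a(n,k-a) + (2a+2)n$. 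The base case is free, since the colours of $v_1, \ldots, v_a$ form a set full up to step~$a$ in the initial colouring. Your proposal never calls Lemma~\ref{lem:changeFull}; that lemma is precisely the engine that makes each batch work here.
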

\begin{proof}
    Let $v_1, \ldots, v_n$ be a degeneracy ordering of $G$. A colouring is \emph{full up to step~$i$} for a set of colour $S$ if $|S| = a$, and all the vertices $v_j$ with $j \leq i$ satisfy the conditions $(i), (ii), (iii)$ for the set~$S$. If a colouring $\gamma$ is full up to step~$n$ for the set $S$, then the set $S$ is full.
    
    Note that for any colouring $\alpha$, any set of colours containing $\alpha(v_1),\ldots, \alpha(v_a)$ is full up to step~$a$.
    So we only need to show that given a colouring $\alpha$ which is full up to step~$i$ for some set $S$, we can reach a colouring $\alpha'$ full up to step~$i+a$ for some (potentially different) set $S'$ in at most $f_a(n, k-a) + 4an$ steps. 
	Suppose now that $\alpha$ is full up to step~$i$ but not $i+1$ for some set $S$. Let $S'$ be the colours of the vertices $v_{i+1},\ldots,v_{i+a}$. Up to adding arbitrary colours to $S'$, we can assume that $|S'| = a$. We will then recolour the graph in order to obtain a colouring where $S'$ is full up to step~$i+a$.
	
	Let $H$ be the graph induced by the vertices $v_1, \ldots, v_i$, and $L_H$ be the list assignment of the vertices of $H$ obtained from $L$ by fixing the colours of the vertices outside $H$. In other words, for every vertex $v \in V(H)$, we remove from $L(v)$ the colours of all the vertices of $N(v) \cap \{ v_{i+1},\ldots,v_n \}$. Note that $L_H$ is an $a$-feasible assignment of $H$ since $L$ was an $a$-feasible assignment of $G$. Additionally, $S$ is a set of full colours for the colouring $\alpha|_H$. 
	
	By Lemma~\ref{lem:changeFull}, there is an $L_H$-colouring $\alpha_H'$ of $H$ which is full for $S'$ such that we can transform $\alpha|_H$ into  $\alpha|_H'$ in at most $f_a(n, k-a) + (2a+2)n$ steps. Let $\alpha'$ be the colouring which agrees with~$\alpha$ on the vertices with index larger than $i$, and agrees with $\alpha|_H'$ on $H$. By Remark~\ref{rem:HandG}, $\alpha'$ can be obtained from $\alpha$ into at most $f_a(n, k-a) + (2a+2)n$ steps. By construction, $S'$ is full up to step~$i$, and since the vertices $v_{i+1}, \ldots v_{i+a}$ are coloured with colours in $S'$, it is full up to step~$i+a$.
	
	Finally, this procedure needs to be repeated at most $\left\lceil \frac {n}{a} \right\rceil - 1 \leq \frac n a$ times (the minus one coming from the fact that at the beginning we had for free a set of colours full for $v_1,\ldots,v_a$). After this many steps, we obtain a colouring full up to step~$n$ for some set $S$ with $|S|=a$, which concludes the proof.
\end{proof}

Finally, we can use the two previous lemma to get the following recursive inequality.

\begin{lemma}
\label{lem:recursionStep}
Let $k \geq 2a$, then $f_a(n,k) \leq (\frac{2n}{a} + 3) f_a(n, k-a) + 10 \cdot n^2$.
\end{lemma}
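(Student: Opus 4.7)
The plan is to route both $\alpha$ and $\beta$ through a common intermediate colouring $\gamma$ that uses no colour of a prescribed set $S$ of size $a$; induction will then handle the remainder using only $k-a$ colours.

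\textbf{Aligning full sets.} I first apply Lemma~\ref{lem:findFull} independently to $\alpha$ and to $\beta$, producing $\alpha_1$ with a full set $S_1$ and $\beta_1$ with a full set $S_2$, both of size $a$; this costs $\tfrac{2n}{a}\, f_a(n, k-a) + 8 n^2$ in total. Next I apply Lemma~\ref{lem:changeFull} to $\alpha_1$ to change its full set to $S_2$, producing $\alpha_2$ at an additional cost of $f_a(n, k-a) + (2a+2)n$. The pair $(\alpha_2, \beta_1)$ now shares the common full set $S := S_2$.

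\textbf{A common target $\gamma$.} Since $k \geq 2a$ and $L$ is $a$-feasible, removing $S$ from every list leaves a feasible list assignment on $G$, so $G$ admits an $L$-colouring $\gamma$ that uses no colour of $S$ (for instance, the one obtained greedily along the reverse of the degeneracy ordering). I claim both $\alpha_2$ and $\beta_1$ reach this $\gamma$ in at most $f_a(n, k-a) + n$ steps. Set $U_\alpha = \{v : \alpha_2(v) \in S\}$ and $H_\alpha = G \setminus U_\alpha$. Fullness of $S$ in $\alpha_2$ guarantees that, for each $v \in H_\alpha$, the colours of $N(v)\cap U_\alpha$ cover $S \cap L(v)$; hence, by Remark~\ref{rem:HandG}, the list assignment on $H_\alpha$ obtained by freezing the colours of $U_\alpha$ coincides with $L\setminus S$ on $H_\alpha$, which is $a$-feasible and uses $k-a$ colours. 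Both $\alpha_2|_{H_\alpha}$ and $\gamma|_{H_\alpha}$ are valid $L_{H_\alpha}$-colourings, so by induction they are linked by a sequence of at most $f_a(n, k-a)$ recolourings, which lifts to a valid sequence in $G$ reaching the colouring equal to $\gamma$ on $H_\alpha$ and to $\alpha_2$ on $U_\alpha$. It remains to recolour the vertices of $U_\alpha$ one at a time to their $\gamma$-values: each such move trades an $S$-colour for a non-$S$-colour, so it cannot clash with the remaining (still $S$-coloured) neighbours in $U_\alpha$, nor with the already-updated neighbours in $H_\alpha$ whose $\gamma$-colour is proper by assumption. This costs at most $n$ additional moves, and the $\beta$-side is treated symmetrically.

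\textbf{Accounting.} Concatenating $\alpha \to \alpha_1 \to \alpha_2 \to \gamma$ with the reverse of $\beta \to \beta_1 \to \gamma$, the total number of recolouring steps is at most
\begin{equation*}
\left(\frac{2n}{a} + 3\right) f_a(n, k-a) + 8 n^2 + (2a+4) n,
\end{equation*}
which is bounded by $\bigl(\tfrac{2n}{a} + 3\bigr) f_a(n, k-a) + 10 n^2$ once $n \geq a+2$; the small regime $n < a+2$ can be absorbed by a direct argument bounding $f_a(n, k)$ from the tiny number of vertices. The main subtle point I expect to handle is the synchronisation between the two halves of the sequence: the inductive step on $k-a$ colours is applied to different subgraphs $H_\alpha$ and $H_\beta$, yet it must land both sides on the same $\gamma$. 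This is what forces the preliminary alignment of the two full sets to a common $S$ via Lemma~\ref{lem:changeFull} and the explicit choice of a fixed $\gamma$ avoiding $S$.
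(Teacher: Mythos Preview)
Your proof is correct and follows essentially the same route as the paper: apply Lemma~\ref{lem:findFull} to both endpoints, use Lemma~\ref{lem:changeFull} once to align the two full sets, then route each side through a common colouring $\gamma$ avoiding $S$ by induction on the $a$-feasible subgraph where the $S$-coloured vertices are frozen. The only cosmetic differences are that the paper applies Lemma~\ref{lem:changeFull} on the $\beta$-side rather than the $\alpha$-side, and it silently absorbs the $(2a+2)n$ term into $10n^2$ without isolating the small-$n$ case you flag.
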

\begin{proof}
Let $L$ be an $a$-feasible list assignment of $G$, and $\alpha$ and $\beta$ be two $L$-colourings of $G$. By Lemma~\ref{lem:findFull}, there exists a colouring $\alpha'$ and a set of colours $S_\alpha$ with $|S_\alpha| = a$ which is full for $\alpha'$ such that the colouring $\alpha'$ can be reached from $\alpha$ in at most $\frac n a f_a(n, k-a) + 4n^2$ steps. 
Similarly, there exists  a colouring $\beta'$ and a set of colours $S_\beta$ with $|S_\beta |=a$ such that $S_\beta$ is full for $\beta'$  such that the colouring $\beta'$ can be reached from $\beta$ in at most $\frac n a f_a(n, k-a) + 4n^2$ steps.  By Lemma~\ref{lem:changeFull}, using an additional $f_a(n, k-a) + 4n$ steps, we can get a colouring $\beta''$ from $\beta'$ such that the set of full colours in $\beta''$ and $\alpha'$ is the same (namely $S_\alpha$). 

Let $S$ be some set of colours disjoint from $S_\alpha$ with $|S| = a$. Let $\gamma$ be a colouring of $G$ that does not use any colour of $S_\alpha$ (such a colouring exists since the list assignment of $G$ is $a$-feasible).

Let $G_\alpha$ be the graph $G$ where the vertices coloured with colours in $S_\alpha$ have been deleted and the colours in $S_\alpha$ removed from the list assignment. Since $S_\alpha$ is full for $\alpha'$, the list assignment of $G_\alpha$ is $a$-feasible. Note that $\gamma|_{G_\alpha}$ is a proper colouring of $G_\alpha$.
So by induction, it is possible to recolour $\alpha'|_{G_\alpha}$ into $\gamma|_{G_\alpha}$ in at most $f_a(n, k-a)$ steps. Since the vertices of $W:=V(G) \setminus V(G_\alpha)$ are coloured with colours in $S_\alpha$, and since~$\gamma$ does not use any of these colours, one can finally recolour the vertices of $W$ one by one from their colours in $S_\alpha$ to their target colours in $\gamma$.

Let $G_{\beta''}$ be the graph where the vertices coloured with $S_{\beta''}=S_\alpha$ in $\beta''$ have been deleted. One can similarly recolour $\beta''|_{G_{\beta''}}$ into $\gamma|_{G_{\beta''}}$ in at most $f_a(n, k-a)$ steps. Since vertices of $W':=V(G) \setminus V(G_{\beta''})$ are coloured with colours in $S_\alpha$, we can also can recolour the vertices of $W'$ one by one from their colours in $S_\alpha$ to their target colours in $\gamma$.

The total number of steps to transform $\alpha$ into $\beta$ is at most

\begin{align*} 
f(n,k) \le &\quad 2\left(\frac n a f_a(n, k-a) + 4n^2\right) + (f_a(n, k-a) + 4n) + 2 f_a(n, k-a) + 2n \\
\le& \left(\frac {2n} {a} + 3 \right) f_a(n, k-a) + 10n^2
\end{align*}

%
%


\end{proof}

\begin{lemma}\label{lem:formgen}
    For all $k \geq 1$ and $a\ge 1$, we have $f_a(n,k) \leq Ck n \left(\frac {2n} {a} + 3\right)^{\left\lceil \frac {k}{a}\right\rceil - 2} $ for some constant $C > 0$.
\end{lemma}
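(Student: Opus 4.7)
The plan is to proceed by induction on $m := \lceil k/a \rceil$. This is the natural parameter, since each application of Lemma~\ref{lem:recursionStep} drops $k$ by exactly $a$, and the identity $\lceil (k-a)/a \rceil = \lceil k/a \rceil - 1$ (valid whenever $k > a$) means $m$ decreases by one per recursive call. The exponent $m-2$ in the target bound is then exactly the recursion depth before hitting the base case supplied by Lemma~\ref{lem:linear}.

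For the base case I take $m \le 2$, i.e.\ $k \le 2a$, where Lemma~\ref{lem:linear} gives $f_a(n,k) \le kn$. Since $\left(\tfrac{2n}{a}+3\right)^0 = 1$, the target bound reduces to $Ckn$ when $m=2$, and any $C \ge 1$ works. Note that $a$-feasibility forces $k \ge a+1$ for any non-empty graph, so $m \ge 2$ is automatic.

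For the inductive step, assume $m \ge 3$ (equivalently $k > 2a$) and apply Lemma~\ref{lem:recursionStep}:
\[
f_a(n,k) \;\le\; \left(\tfrac{2n}{a}+3\right) f_a(n,k-a) + 10\,n^2.
\]
Substituting the induction hypothesis $f_a(n,k-a) \le C(k-a)n\left(\tfrac{2n}{a}+3\right)^{m-3}$ yields
\[
f_a(n,k) \;\le\; C(k-a)n\left(\tfrac{2n}{a}+3\right)^{m-2} + 10\,n^2,
\]
so to close the induction it suffices to verify $10\,n^2 \le Can\left(\tfrac{2n}{a}+3\right)^{m-2}$. Because $m-2 \ge 1$ in this case, the right-hand side is at least $Can \cdot \tfrac{2n}{a} = 2Cn^2$, hence $C = 5$ is enough.

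The proof is essentially mechanical; the only small points to double-check are the ceiling identity on the exponent and that the additive $O(n^2)$ error from Lemma~\ref{lem:recursionStep} is absorbed by the $Can$ factor of slack built into the target bound. Both are handled uniformly by the single constant $C = 5$, and I do not expect any genuine obstacle.
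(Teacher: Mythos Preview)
Your proof is correct and follows essentially the same approach as the paper: induction with base case $k \le 2a$ handled by Lemma~\ref{lem:linear} and inductive step by Lemma~\ref{lem:recursionStep}. The paper inducts on $k$ and leaves the arithmetic of the inductive step implicit, whereas you induct on $m = \lceil k/a \rceil$ and spell out how the additive $10n^2$ term is absorbed (yielding the explicit constant $C=5$); these are equivalent since the ceiling identity makes $m$ drop by exactly one per recursive call.
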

\begin{proof}
    We prove it by induction on the total number of colours $k$. The base case is when $k \le 2a$ and simply is a consequence of Lemma~\ref{lem:linear} (note that $\left\lceil \frac ka \right\rceil = 2$ since $k > a$). 
     The induction step is obtained using Lemma~\ref{lem:recursionStep}.
\end{proof}

We now have all the ingredients to prove Theorem~\ref{thm:mainlist}, which completes the proof of Theorem~\ref{thm:main}.

\begin{proof}[Proof of Theorem~\ref{thm:mainlist}]
The first point is the result from Lemma~\ref{lem:linear}. The second and last points are consequences of Lemma~\ref{lem:formgen} with the corresponding values of $a$. Let us prove the third point. Since $k \le (1 + \frac 1\varepsilon )a$, we have $\left\lceil \frac ka \right\rceil -2 \le \lceil 1 + \frac 1 \varepsilon\rceil -2 \le \lceil \frac 1\varepsilon \rceil - 1$. Consequently, the function given by Lemma~\ref{lem:formgen} is $O(n^{\lceil 1/\varepsilon \rceil})$. Note that for the second and third point, the constant does not depend on $k$ or $a$ since $k/a$ is a constant (but it depends on~$\varepsilon$ in the third point). 
\end{proof}

\section{Planar bipartite graphs}\label{sec:bip}

The rest of this article is devoted to prove Cereceda's quadratic conjecture for planar bipartite graphs.

\bipPlanar*


We start by defining the \emph{level} of a vertex. Given a graph $G$ all the vertices with degree at most~$3$ have level~$1$. For any $i > 1$, the vertices with level~$i$ are the ones with degree at most~$3$ after removal of all the vertices of level~$j < i$. 

Let $G$ be a planar graph given with its representation. The \emph{size} of a face is the number of vertices incident to the face. The proof of Theorem~\ref{thm:planarBip} is based on a discharging proof. We are assuming that the results does not hold. The proof then goes in two steps. First, we show that any planar bipartite graph must contain at least one "configuration" of a list of (here two) configurations.
Then, we prove that any counter-example cannot contain any other these configurations, which implies that this counter-example cannot exist. Let us first give the two configurations we will need.

\begin{lemma}
\label{lem:struct}
Let $G$ be a planar bipartite graph. At least one of the following holds:
\begin{enumerate}[label={(\roman*)}]
\item $G$ contains a vertex of degree at most $2$ or, 
\item $G$ contains a vertex of degree $3$ incident to three vertices of level at most $2$, and incident to a face of size four.
\end{enumerate}
\end{lemma}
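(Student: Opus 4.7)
The plan is to prove the lemma by a discharging argument. Suppose, for contradiction, that $G$ is a planar bipartite graph satisfying neither (i) nor (ii): every vertex has degree at least $3$, and every degree-$3$ vertex either has a neighbor of level $\geq 3$ or is not incident to any $4$-face.

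Fix a planar embedding and assign initial charges $\mu(v) = d(v) - 4$ to each vertex $v$ and $\mu(f) = |f| - 4$ to each face $f$. Since $G$ is bipartite, every face has even length at least $4$, and Euler's formula together with $\sum_v d(v) = \sum_f |f| = 2|E|$ yields a total charge of $-8$. I will redistribute using two rules: (R1) every face of size $\geq 6$ sends $1/3$ to each of its incident degree-$3$ vertices; and (R2) every vertex of level $\geq 3$ sends $1$ to each of its degree-$3$ neighbors. The goal is to show that after discharging every element has non-negative charge, contradicting the total of $-8$.

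The verification splits into a few cases. A face $f$ of size $|f| \geq 6$ retains $|f| - 4 - |f|/3 \geq 0$, while $4$-faces move no charge. Any vertex of level $\leq 2$ sends nothing; if such a vertex has degree $\geq 4$, its initial charge is already non-negative. For a degree-$3$ vertex $v$ (initial charge $-1$), the failure of (ii) yields two sub-cases: if some neighbor of $v$ has level $\geq 3$, then (R2) gives $v$ a full unit of charge; otherwise all three neighbors have level $\leq 2$ and the failure of (ii) forces $v$ to be incident to no $4$-face, so all three incident faces have size $\geq 6$ (using bipartiteness), giving $v$ a charge of $3 \cdot 1/3 = 1$ via (R1).

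The step I expect to be the main obstacle — and what forces the choice of constants — is the bookkeeping at a level-$\geq 3$ vertex $v$, where I must show that (R2) does not exhaust its initial charge. By the definition of level, $v$ has at least $4$ neighbors of level $\geq 2$, and each such neighbor has degree $\geq 4$ (otherwise it would belong to $V_1$). Since the minimum degree of $G$ is $3$, the degree-$3$ neighbors of $v$ are exactly its level-$1$ neighbors, so $v$ has at most $d(v) - 4$ of them; hence the total charge sent via (R2) is at most $d(v) - 4 = \mu(v)$, so $v$'s final charge is non-negative. Combining all cases, the total charge becomes non-negative, contradicting the total of $-8$ and completing the proof.
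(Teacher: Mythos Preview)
Your proof is correct and follows essentially the same discharging argument as the paper: identical initial charges, the same two rules (level-$\geq 3$ vertices send $1$ to degree-$3$ neighbors; large faces send $1/3$), and the same case analysis, including the key observation that a level-$\geq 3$ vertex has at least four neighbors of degree $\geq 4$ and hence at most $d(v)-4$ degree-$3$ neighbors. The only cosmetic difference is that your rule (R1) sends $1/3$ only to degree-$3$ vertices rather than to all incident vertices, which of course makes no difference to the verification.
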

\begin{proof}
We use a discharging argument to prove the result. Suppose by contradiction that there exists a planar bipartite graph $G$ satisfying none of the conditions above. Assign to each vertex $v$ the weight $\deg(v) -4$, and to each face $f$ the weight $\deg(f) -4$. The total weight assigned this way is:
\begin{align*}
\sum_{v \in G} (\deg(v) -4) + \sum_{f, \text{face}} (\deg(f) -4) = 2|E| - 4n + 2|E| - 4|F| = 4(|E| - n - |F|) = -8
\end{align*}
where $n, |E|$ and $|F|$ are the number of vertices, edges, and faces of $G$. The last equality comes from Euler's formula for planar graphs.

Now the goal is to re-allocate the weights on the faces and the vertices to obtain a total weight that is non-negative, a contradiction with the equation above. Note that all the faces as well as vertices of degree at least~$4$ have non-negative weights. In order to get a contradiction, we apply the following procedure and prove that after applying it all the vertices and faces have non-negative weights:
\begin{itemize}
\item Every vertex of level at least $3$ gives weight $1$ to all its neighbours of degree $3$.
\item Every face of size at least $6$ gives weight $\frac 1 3$ to all its incident vertices. 
\end{itemize}
After applying this transformation, every face has a non-negative weight. Indeed, the faces of lengt $4$ still have weight $0$, and the faces of length at least $6$ gave a total of $\frac {\deg(f)} 3 \leq \deg(f) - 4$ since $\deg(f) \geq 6$.

Vertices of level $2$ initially have non-negative weights since their degree is at least four, and they did not give away any of their weights. Hence their weights remain non-negative. Vertices of level at least $3$ must have, by definition at least $4$ neighbours of degree larger than $3$ (otherwise they would be of level $2$), and thus keep a positive weight. 

Finally, let $v$ be a vertex of degree $3$. If $v$ is adjacent to a vertex of level at least $3$, it receives weight $1$ during the first step, and its weight is non-negative. 
Otherwise, $v$ is adjacent to three vertices of level at most $2$, and then, by assumption on $G$, the three faces incident to $v$ must have size at least $6$. Then, $v$ received $\frac 1 3$ from each of the faces during the second step, and its weight is non-negative. 

Hence, after applying the the procedure, every face and every vertex has a non-negative weight, a contradiction with the fact that the total weight is negative.
\end{proof}

We can now prove the Theorem~\ref{thm:planarBip}. The proceeds by iteratively reducing the size of the graph, either by removing a vertex, or contracting two vertices together.

\begin{proof}[Proof of Theorem~\ref{thm:planarBip}]
We will show by induction on the size of the graph $G$ that for any two $5$ colourings of $G$, $\alpha$ and $\beta$, there exists a transformation from $\alpha$ to $\beta$ such that each vertex is recoloured at most $cn$ times, for some constant $c$ defined later. 

If $G$ is reduced to a single vertex, the result is immediate.
Otherwise, let $n > 1$. One of the two conditions of Lemma~\ref{lem:struct} must occur.
\smallskip

\noindent \textbf{Case (i).} The graph $G$ contains a vertex $v$ of degree at most $2$.

Let $H$ be the subgraph obtained after removing $v$. By induction, there exists a transformation $\mathcal{S}$ from $\alpha|_H$ to $\beta|_H$ such that each vertex is recoloured at most $c(n-1)$ times. Given this transformation, we are going to produce a transformation $\mathcal{S}'$ for the whole graph $G$. We follow the transformation steps of $\mathcal{S}$. If the same recolouring step is possible in $G$ we do it. Otherwise, it means that one of the (at most) two neighbours of $v$ is recoloured, and that the target colour is the current colour $c$ of $v$. So we need to recolour $v$ to make the transition possible. To recolour $v$, we have at least two possible new colour choices $X$ for $v$ distinct from $c$. We can look forward at the sequence $\mathcal{S}$ to see which colour $c'$ will appear next in the neighbourhood of $v$. We choose for $v$ a colour of $X$ distinct from $c'$. In particular, the choice of $c'$ ensures that $v$ will not be recoloured next time a neighbour of $v$ is recoloured. At the end of the transformation $\mathcal{S}$, we may need one additional step to recolour $v$ to its target colour. 

Thus in total, $v$ is recoloured at most $\frac {2 c(n-1)}{2} + 1 \leq cn$ times (which holds as long as $c \ge 1$).
\smallskip

\noindent
\textbf{Case (ii).} The graph $G$ contains a vertex $v$ of degree three such that:
\begin{itemize}
\item all three neighbours of $v$ have level at most $2$, 
\item $v$ is incident to a face of length $4$.
\end{itemize}
Let $f$ be the face of length $4$ incident to $v$, and let $w$ be the vertex opposite to $v$ on $f$. Let $u$ be the vertex adjacent to $v$, and not incident to $f$. Let $v_1$ and $v_2$ be the two remaining vertices adjacent to $f$ (see Figure~\ref{fig:caseii}). In this case, we want to reduce the graph by merging the two vertices~$v$ and~$w$ (maintaining a bipartite planar graph). For this we will need the following result.

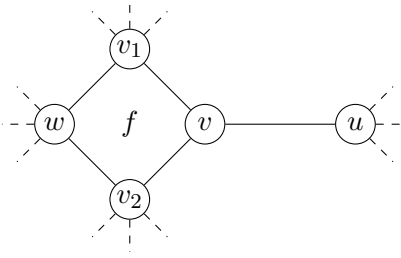
\begin{figure}[!ht]
\centering
\begin{tikzpicture}[nstyle/.style={
	draw, circle, inner sep = 0pt, minimum size=15pt
	}]
    \node[nstyle] (v) at (0,0) {$v$} ;
    \node[nstyle] (u) at (2,0) {$u$} ;
    \node[nstyle] (w) at (-2,0) {$w$} ;
    \node[nstyle] (v1) at (-1,1) {$v_1$} ;
    \node[nstyle] (v2) at (-1,-1) {$v_2$} ;
    \node at (-1,0) {$f$} ;
    \draw (u) -- (v) (v) -- (v1) (v) -- (v2) (v2) -- (w) (v1) -- (w) ;  
    \draw[dashed] (u) -- +(0.5,0.5) (u) -- +(0.5, -0.5) (u) -- +(0.7, 0) ;
    \draw[dashed] (v1) -- +(0.5,0.5) (v1) -- +(-0.5, 0.5) (v1) -- +(0, 0.7) ;
    \draw[dashed] (v2) -- +(0.5,-0.5) (v2) -- +(-0.5, -0.5) (v2) -- +(0, -0.7) ;
    \draw[dashed] (w) -- +(-0.5,0.5) (w) -- +(-0.5, -0.5) (w) -- +(-0.7, 0) ;
    
\end{tikzpicture}
\caption{Reducible configuration in case $(ii)$.}
\label{fig:caseii}
\end{figure}

\begin{claim}\label{clm:bip}
Starting from $\alpha$, we can reach a colouring $\alpha'$ such that $\alpha'(v) = \alpha'(w)$. This colouring can be reached by recolouring each vertex at most twice. 
\end{claim}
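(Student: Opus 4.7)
The plan is a case analysis. If $\alpha(v)=\alpha(w)$ nothing needs to be done, so assume, after relabelling, $\alpha(v)=1$ and $\alpha(w)=2$; since $v_1,v_2$ are common neighbours of $v$ and $w$, their colours lie in $\{3,4,5\}$. I first try the easy single-step moves. If $\alpha(u)\neq 2$, recolour $v$ to $2$. If $1\notin\alpha(N(w))$, recolour $w$ to $1$. More generally, any colour $c\notin\{\alpha(u)\}\cup\alpha(N(w))$ automatically lies outside $\alpha(N(v))=\{\alpha(u),\alpha(v_1),\alpha(v_2)\}$ (because $\alpha(v_1),\alpha(v_2)\in\alpha(N(w))$), and since $v$ and $w$ are non-adjacent (opposite corners of the $4$-face $f$), the pair can both be moved to $c$ in one step each.

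If all of the above fail then $\{\alpha(u)\}\cup\alpha(N(w))=\{1,\ldots,5\}$, which forces $\alpha(u)=\alpha(w)=2$ and $\alpha(N(w))=\{1,3,4,5\}$. This is the only nontrivial ``bad'' case, and I handle it with a Kempe-chain argument. Let $C_{12}$ denote the subgraph of $G$ induced by vertices of colour $1$ or $2$; any path in $C_{12}$ alternates colours, and because $v$ and $w$ lie on the same part of the bipartition (they are opposite on the $4$-face $f$), any $C_{12}$-path joining them would have even length and matching colours, contradicting $\alpha(v)\neq\alpha(w)$. Hence the component $K$ of $C_{12}$ containing $v$ does not contain $w$, although $u\in K$. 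A Kempe swap on $K$ would move $\alpha(v)$ from $1$ to $2$ while leaving $w$ unaffected.

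I realise this partial swap by four single-vertex recolourings, each vertex being touched at most twice: (1) recolour $v$ from $1$ to a temporary colour $c\in\{3,4,5\}\setminus\{\alpha(v_1),\alpha(v_2)\}$ (non-empty, and valid because $c\neq\alpha(u)=2$); (2) for each colour-$1$ neighbour $y$ of $u$ distinct from $v$, recolour $y$ to some colour in $\{3,4,5\}\setminus\alpha(N(y))$; (3) recolour $u$ from $2$ to $1$, which is now legal because after (1)--(2) no neighbour of $u$ has colour $1$; (4) recolour $v$ from $c$ to $2$, legal because $\alpha(u)=1$ and $\alpha(v_1),\alpha(v_2)\neq 2$. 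Under this schedule $v$ is touched exactly twice while $u$ and each auxiliary $y$ are touched at most once, and $w$ is untouched.

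The main obstacle is step (2): showing that every colour-$1$ neighbour $y$ of $u$ admits a free colour in $\{3,4,5\}$. Since $u$ has level at most $2$ by Lemma~\ref{lem:struct}, at most three of its neighbours have degree $\geq 4$ in $G$; any ``light'' ($\deg\leq 3$) colour-$1$ neighbour $y$ has $|\alpha(N(y))|\leq 3$, hence admits a free colour in $\{3,4,5\}$ automatically. The bounded number of ``heavy'' colour-$1$ neighbours of $u$ requires a finer local analysis exploiting the bipartite planar structure (in particular that outside $K$ the neighbours of $y$ contribute only colours from $\{3,4,5\}$, so the flexibility of five colours suffices), and this is the technical core of the argument.
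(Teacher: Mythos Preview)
Your argument has a genuine gap, and you have in fact flagged it yourself: step~(2) is not justified for the heavy colour-$1$ neighbours of $u$, and the ``finer local analysis'' you promise is never carried out. The gap is real, not merely a missing detail. A heavy neighbour $y$ of $u$ with $\alpha(y)=1$ has $u$ (colour $2$) in its neighbourhood, but its remaining $\ge 3$ neighbours may well use all of $\{3,4,5\}$; then $\{3,4,5\}\setminus\alpha(N(y))=\emptyset$ and $y$ cannot be moved off colour~$1$ without first disturbing one of its neighbours, which would break your ``each vertex at most twice'' bound and propagate the problem. Nothing about bipartiteness or planarity rules this out locally. The Kempe-chain paragraph does not help either: it only shows $w\notin K$, but you never actually perform a Kempe swap, and a Kempe swap cannot in general be realised by single-vertex recolourings.

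The fix the paper uses is short and avoids the obstacle entirely: do not insist on recolouring $u$ to colour~$1$. Since $u$ has level at most~$2$, it has at most three neighbours of degree $\ge 4$; choose the new colour $c$ for $u$ among the five so that $c$ differs from the colours of these (at most three) heavy neighbours and from $\alpha(w)$. Then the only neighbours of $u$ that may block colour~$c$ are light (degree $\le 3$), and each of those can be pushed off $c$ with a single recolouring. After that, recolour $u$ to $c$ and then $v$ to $\alpha(w)$. Every vertex is touched at most once except possibly $v$ (twice). The key point you missed is that the freedom in choosing $c$ should be spent on avoiding the heavy neighbours of $u$, not on matching $\alpha(v)$.
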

\begin{proof}
Since $v_1$ and $v_2$ are adjacent to $w$, their colour is different from $\alpha(w)$. Hence, the only obstruction to recolour directly $v$ with $\alpha(w)$ is if $u$ is coloured with $\alpha(w)$. Since $u$ has level at most~$2$, it has at most~$3$ neighbours with degree larger than~$3$. Let $c$ be a colour different from these three neighbours, and different from $\alpha(w)$. First, we recolour all the degree three neighbours of~$u$ which are coloured~$c$ with any other colour (note that~$w$ is not recoloured since $\alpha(w) \ne c$, but~$v$ might be recoloured). Then, we recolour~$u$ with~$c$, and finally~$v$ can be recoloured with~$\alpha(w)$ (since $G$ is bipartite, the recolouring sequence ensures that the only neighbor of $v$ that is recoloured is $u$). Every vertex is recoloured at most once, except~$v$ which may be recoloured twice.

\end{proof}

By Claim~\ref{clm:bip} starting from $\alpha$ and $\beta$, we obtain two colourings $\alpha'$ and $\beta'$ such that $\alpha'(v) = \alpha'(w)$ and $\beta'(v) = \beta'(w)$. Let $H$ be the graph obtained after merging~$v$ and~$w$ into a single vertex $x$. Clearly $H$ is still planar, and it is still bipartite since $v$ and~$w$ are on the same side of the bipartition of $G$. Moreover, $\alpha'$ and $\beta'$ are proper colourings of $H$. By induction, there is a transformation from $\alpha'$ to $\beta'$ in $H$ such that each vertex is recoloured at most $c(n-1)$ times. By applying each transformation on $x$ to both $v$ and $w$, this gives a transformation in $G$ from $\alpha'$ to $\beta'$ such that each vertex is recoloured at most $c(n-1)$ times. 

This gives a transformation from $\alpha$ to $\beta$ where each vertex is recoloured at most $c(n-1) + 4 \leq cn$ times, taking $c = 4$.
\end{proof}

\bibliography{biblio.bib}

\begin{thebibliography}{10}

\bibitem{BBRM18}
H.-J. {B{\"o}ckenhauer}, E.~{Burjons}, M.~{Raszyk}, and P.~{Rossmanith}.
\newblock {Reoptimization of Parameterized Problems}.
\newblock {\em arXiv e-prints}, 2018.

\bibitem{BonamyB18}
M.~Bonamy and N.~Bousquet.
\newblock Recoloring graphs via tree decompositions.
\newblock {\em Eur. J. Comb.}, 69:200--213, 2018.

\bibitem{BonamyBFJ19}
M.~Bonamy, N.~Bousquet, C.~Feghali, and M.~Johnson.
\newblock On a conjecture of mohar concerning kempe equivalence of regular
  graphs.
\newblock {\em J. Comb. Theory, Ser. {B}}, 135:179--199, 2019.

\bibitem{BonamyBP18}
M.~{Bonamy}, N.~{Bousquet}, and G.~{Perarnau}.
\newblock {Frozen $(\Delta+1)$-colourings of bounded degree graphs}.
\newblock {\em arXiv e-prints}, 2018.

\bibitem{BonamyJ12}
M.~Bonamy, M.~Johnson, I.~Lignos, V.~Patel, and D.~Paulusma.
\newblock {Reconfiguration graphs for vertex colourings of chordal and chordal
  bipartite graphs}.
\newblock {\em Journal of Combinatorial Optimization}, pages 1--12, 2012.

\bibitem{BonsmaC07}
P.~Bonsma and L.~Cereceda.
\newblock {Finding Paths Between Graph Colourings: {PSPACE}-Completeness and
  Superpolynomial Distances.}
\newblock In {\em {MFCS}}, volume 4708 of {\em {Lecture Notes in Computer
  Science}}, pages 738--749, 2007.

\bibitem{BoseLPV18}
P.~Bose, A.~Lubiw, V.~Pathak, and S.~Verdonschot.
\newblock Flipping edge-labelled triangulations.
\newblock {\em Comput. Geom.}, 68:309--326, 2018.

\bibitem{BousquetP16}
N.~Bousquet and G.~Perarnau.
\newblock Fast recoloring of sparse graphs.
\newblock {\em Eur. J. Comb.}, 52:1--11, 2016.

\bibitem{Cereceda}
L.~Cereceda.
\newblock {\em {Mixing Graph Colourings}}.
\newblock PhD thesis, London School of Economics and Political Science, 2007.

\bibitem{Cereceda09}
L.~Cereceda, J.~van~den Heuvel, and M.~Johnson.
\newblock {Mixing 3-colourings in bipartite graphs}.
\newblock {\em Eur. J. Comb.}, 30(7):1593--1606, 2009.

\bibitem{CerecedaHJ11}
L.~Cereceda, J.~van~den Heuvel, and M.~Johnson.
\newblock {Finding paths between 3-colorings}.
\newblock {\em Journal of Graph Theory}, 67(1):69--82, 2011.

\bibitem{ChenDMPP19}
S.~Chen, M.~Delcourt, A.~Moitra, G.~Perarnau, and L.~Postle.
\newblock Improved bounds for randomly sampling colorings via linear
  programming.
\newblock In {\em Proceedings of the Thirtieth Annual {ACM-SIAM} Symposium on
  Discrete Algorithms, {SODA} 2019}, pages 2216--2234, 2019.

\bibitem{Diestel}
R.~Diestel.
\newblock {\em {Graph Theory}}, volume 173 of {\em {Graduate Texts in
  Mathematics}}.
\newblock Springer-Verlag, Heidelberg, third edition, 2005.

\bibitem{dyer2006randomly}
M.~Dyer, A.~D. Flaxman, A.~M. Frieze, and E.~Vigoda.
\newblock {Randomly coloring sparse random graphs with fewer colors than the
  maximum degree}.
\newblock {\em Random Structures \& Algorithms}, 29(4):450--465, 2006.

\bibitem{EibenF18}
E.~Eiben and C.~Feghali.
\newblock Towards cereceda's conjecture for planar graphs.
\newblock {\em CoRR}, abs/1810.00731, 2018.

\bibitem{Feghali19}
C.~Feghali.
\newblock Paths between colourings of sparse graphs.
\newblock {\em Eur. J. Comb.}, 75:169--171, 2019.

\bibitem{Feghali19_2}
C.~{Feghali}.
\newblock {Reconfiguring 10-colourings of planar graphs}.
\newblock {\em arXiv e-prints}, 2019.

\bibitem{FeghaliJP16}
C.~Feghali, M.~Johnson, and D.~Paulusma.
\newblock A reconfigurations analogue of brooks' theorem and its consequences.
\newblock {\em Journal of Graph Theory}, 83(4):340--358, 2016.

\bibitem{frieze2007survey}
A.~Frieze and E.~Vigoda.
\newblock A survey on the use of markov chains to randomly sample colourings.
\newblock {\em Oxford Lecture Series in Mathematics and its Applications},
  34:53, 2007.

\bibitem{mohar4}
B.~Mohar and J.~Salas.
\newblock On the non-ergodicity of the {S}wendsen--{W}ang--{K}oteck{\'y}
  algorithm on the {K}agom{\'e} lattice.
\newblock {\em Journal of Statistical Mechanics: Theory and Experiment},
  2010(05):P05016, 2010.

\bibitem{Nishimura17}
N.~Nishimura.
\newblock Introduction to reconfiguration.
\newblock {\em preprint}, 2017.

\bibitem{Heuvel13}
J.~van~den Heuvel.
\newblock {\em {The Complexity of change}}, page 409.
\newblock {Part of London Mathematical Society Lecture Note Series}. Cambridge
  University Press, {S. R. Blackburn, S. Gerke, and M. Wildon} edition, 2013.

\end{thebibliography}

\bibliographystyle{abbrv}

\end{document}